\def\bra#1{{\left\langle #1 \right|}}
\def\ket#1{{\left| #1 \right\rangle}}
\definecolor{amber}{rgb}{1.0, 0.49, 0.0}
\newtheorem{theorem}{Theorem}
\newtheorem{lemma}{Lemma}
\newtheorem{corollary}{Corollary}
\newtheorem{remark}{Remark}
\newtheorem{proposition}{Proposition}
\newtheorem{definition}{Definition}
\newcommand*{\bbE}{\mathbb{E}}
\newcommand*{\bbI}{\mathbb{I}}
\newcommand*{\cE}{\mathcal{E}}
\newcommand*{\cL}{\mathcal{L}}
\newcommand*{\cM}{\mathcal{M}}
\newcommand*{\cN}{\mathcal{N}}
\newcommand*{\id}{\mathsf{id}}
\newcommand*{\proj}[1]{\ket{#1}\bra{#1}}
\newcommand*{\Tr}{\mathrm{Tr}}
\newcommand*{\myspan}{\mathrm{span}}
\newcommand{\beq}{\begin{equation}}
\newcommand{\eeq}{\end{equation}}
\providecommand{\av}[1]{\langle #1 \rangle}
\begin{document}

\title{Generic emergence of objectivity of observables in infinite dimensions}
\author{Paul A. Knott}
\email{Paul.Knott@nottingham.ac.uk}
\affiliation{Centre for the Mathematics and Theoretical Physics of Quantum Non-Equilibrium Systems (CQNE), School of Mathematical Sciences, University of Nottingham, University Park, Nottingham NG7 2RD, UK}
\author{Tommaso Tufarelli}
\affiliation{Centre for the Mathematics and Theoretical Physics of Quantum Non-Equilibrium Systems (CQNE), School of Mathematical Sciences, University of Nottingham, University Park, Nottingham NG7 2RD, UK}
\author{Marco Piani}
\affiliation{SUPA and Department of Physics, University of Strathclyde, Glasgow, G4 0NG, UK}
\author{Gerardo Adesso}
\affiliation{Centre for the Mathematics and Theoretical Physics of Quantum Non-Equilibrium Systems (CQNE), School of Mathematical Sciences, University of Nottingham, University Park, Nottingham NG7 2RD, UK}
\date{\today}

\begin{abstract}
Quantum Darwinism posits that information becomes objective whenever multiple observers indirectly probe a quantum system by each measuring a fraction of the environment. It was recently shown that objectivity of observables emerges generically from the mathematical structure of quantum mechanics, whenever the system of interest has finite dimensions and the number of environment fragments is large [F. G. S. L. Brand\~ao, M. Piani, and P. Horodecki,
Nature Commun. 6, 7908 (2015)]. Despite the importance of this result, it necessarily excludes many practical systems of interest that are infinite-dimensional, including harmonic oscillators. Extending the study of Quantum Darwinism to infinite dimensions is a nontrivial task: we tackle it here by using a modified diamond norm, suitable to quantify the distinguishability of channels in infinite dimensions. We prove two theorems that bound the emergence of objectivity, first for finite energy systems, and then for systems that can only be prepared in states with an exponential energy cut-off. We show that the latter class of states includes any bounded-energy subset of single-mode Gaussian states.
\end{abstract}
\maketitle

How does the objective classical world emerge from an underlying quantum substrate? The theories of decoherence and Quantum Darwinism (QD) can provide a rigorous framework to answer this question \cite{ZurekRMP,SchlosshauerRMP,ZurekQD,Ollivier2004,Ollivier2005,Blume2006,Horodecki2015}. Firstly, in decoherence we acknowledge that realistic systems are rarely isolated, but rather are coupled to an inaccessible environment. It can then be shown that, under suitable assumptions on the system-environment interaction, only states of a certain basis -- the pointer basis -- survive the system-environment interaction, while superpositions of these pointer states are suppressed. QD extends this by considering observers who interact indirectly with the system by having access to fragments of the environment, as illustrated in Fig.~\ref{artwork}: each observer measuring the system only has access to a single fragment of the environment (for example, we observe everyday objects by measuring a small fragment of the vast photon environment). This formalism can now be used to prove the emergence of objectivity: it has been shown, using various models \cite{Blume2008,Zwolak2009,Riedel2010,Riedel2012,Galve2015,Balaneskovic2015,Tuziemski2015,Tuziemski2015b,Tuziemski2016,Balaneskovic2016,Lampo2017,Pleasance2017}, that multiple observers measuring the same system will invariably agree on the outcomes of their measurements, and hence their outcomes can be regarded as objective.

\begin{figure}[t!]
\centering
\includegraphics[width=7.5cm]{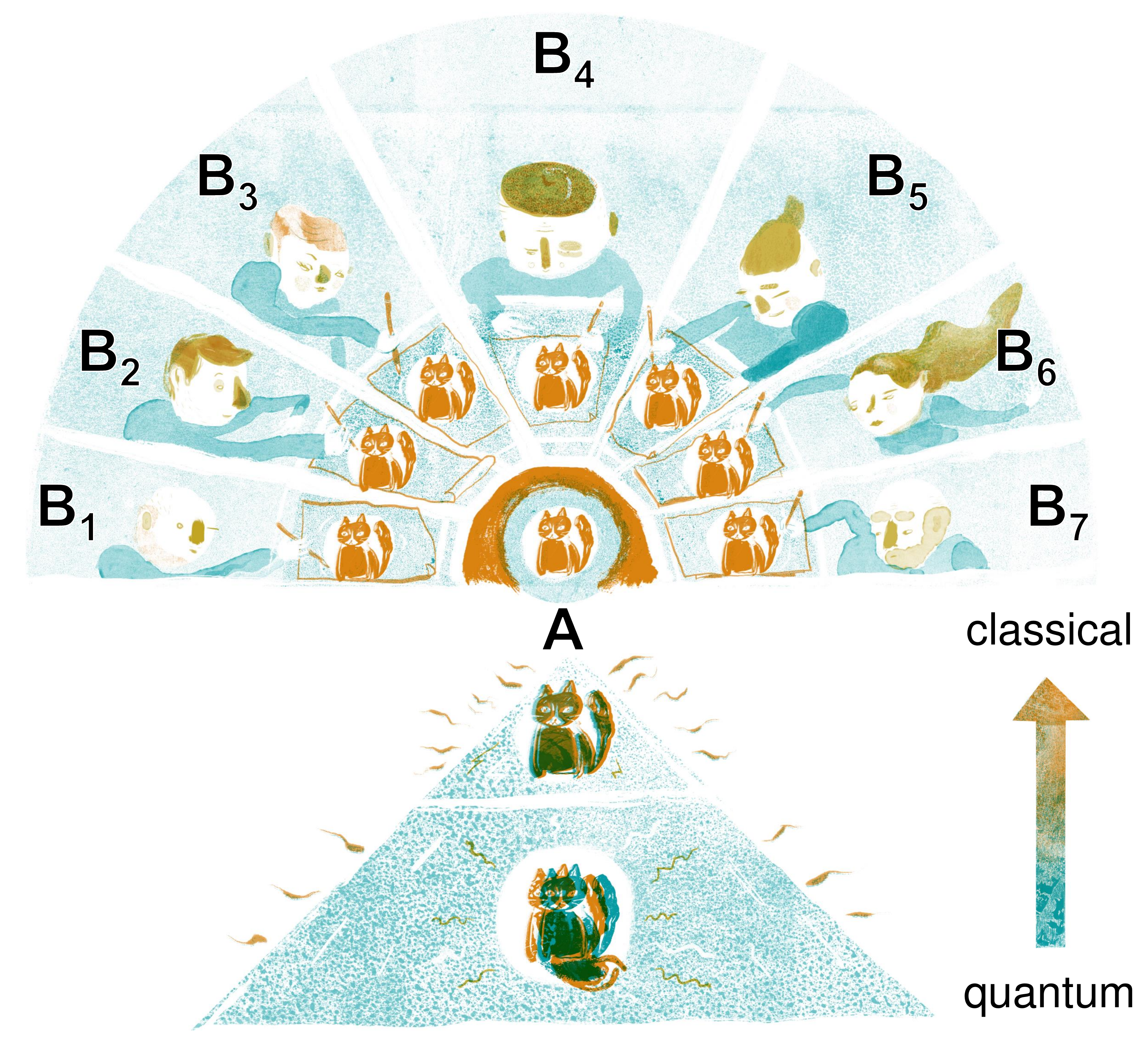}
\caption{In Quantum Darwinism, the objective classical reality (image of a cat, subsystem $A$ in the middle) emerges from an underlying quantum mechanical description  (bottom layer, illustrating superposition effects) through the observation of multiple environment fragments (subsystems $B_1 \ldots B_N$, depicted as painting artists). Here we show that the objectivity of observables is generic even when $A$ is an infinite dimensional quantum system, subject to suitable energy constraints.
\label{artwork}}
\end{figure}

To obtain a complete description of objectivity, we must also describe how the \emph{objectivity of observables} emerges -- namely, why do multiple observers measure the same observables? Indeed, if different observers had access to inequivalent observables, the very notion of objectivity of the measurement outcomes would be ill-defined. Objectivity of observables has been studied in numerous contexts, where it was shown that, when an observer only has access to a single fragment of the environment, the only information available is information about certain preferred observables, which correspond to the preferred basis \cite{ZurekQD,Ollivier2004,Ollivier2005,Blume2006,Horodecki2015,Blume2008,Zwolak2009,Riedel2010,Riedel2012,Galve2015,Balaneskovic2015,Tuziemski2015,Tuziemski2015b,Tuziemski2016,Balaneskovic2016,Lampo2017,Pleasance2017}. Until recently, the majority of research consisted of studying specific models; the question then remained of whether the emergence of objectivity is a generic feature, or only a model-specific one.

This changed with a result by Brand{\~a}o et al.~\cite{Brandao2015QD}, who showed that objectivity of observables is a generic phenomenon that emerges in a model-independent way from the basic mathematical structure of quantum mechanics, whenever the number of environment fragments gets large. Objectivity was there intended in the sense of information available and agreed upon about some general measurement performed indirectly on the system and described by a positive-operator-valued measure (POVM)~\cite{Nielsen:2011aa}. However, there was one caveat: the authors of \cite{Brandao2015QD} required the system of interest to live in a finite-dimensional Hilbert space. Despite the appeal of finite-dimensional results, they do not cover the physically very relevant case of continuous-variable systems.

In this paper we overcome this significant restriction to prove that infinite dimensional systems, under appropriate constraints, also exhibit objectivity of observables. Specifically, we show that objectivity emerges generically when considering either of the following physically motivated restrictions: (i) systems with finite mean energy --- which arguably include all realistic systems of interest; and (ii) systems with an exponential energy cut-off --- which include systems prepared in single-mode bosonic Gaussian states. In both cases, we prove exact bounds to show that, as the number of environment fragments grows large, objectivity of observables emerges. In \cite{Brandao2015QD}, the bound on objectivity depended only on the system dimension and the number of environment fragments; in contrast, our bounds provide non-trivial extensions that show an explicit dependence on the system's mean energy in the first case, and on the strength of the exponential cut-off in the second. Our results rely on a combination of mathematical techniques of potential independent interest,
and overall shed further light on the underlying structure of our physical reality.


As shown in Fig.~\ref{artwork}, the framework we study consists of a collection of (generally infinite-dimensional) subsystems. We can select any one of these as our system of interest and label it $A$; the rest of the subsystems, denoted $B_1$ to $B_N$, are then taken to be the $N$ different fragments of the environment of $A$. We then consider any completely positive trace-preserving (cptp) map, i.e., any quantum channel, $\Lambda : \mathcal{D}(A) \rightarrow \mathcal{D}(B_1 \otimes \ldots \otimes B_N)$ from the system to the environments. The next step is crucial to QD \cite{ZurekQD}: we assume that each observer who wishes to measure system $A$ can only do so by measuring one fragment of the environment. To model this, we define the channel $\Lambda_j := \Tr_{\backslash B_j} \circ \Lambda $ as the effective dynamics from $A$ to $B_j$, where $ \Tr_{\backslash B_j}$ indicates the partial trace over all fragments except $B_j$.

As anticipated we shall consider restrictions on the properties of system $A$, but notice that throughout this letter the fragments $B_1,...,B_N$ do not need to satisfy any constraint. Our main results are expressed through generalizations of the so-called diamond norm. The latter encapsulates the notion of best possible distinguishability between two different physical processes, as allowed by quantum mechanics, and does not take into account reasonable physical constraints, like the energy involved in the discrimination procedure. The generalizations we will consider will instead consider said restrictions, e.g., focusing on systems with bounded mean energy.  Specifically, we define (see also \cite{PLOS,Shirokov2017,Winter2017}):
\begin{definition} (Energy-constrained diamond norm) For a Hermiticity-preserving linear map $\Lambda: \mathcal{D}(A) \rightarrow \mathcal{D}(B)$, and a finite $\bar n > 0$, we define
\beq
\|\Lambda\|_{\Diamond\bar{n}}:=\sup_{\rho:\Tr(\rho \hat{n}_{A})\leq \bar n} \|\Lambda_{A}\otimes\id_{C}  [\rho]\|_1\,,\label{eq:cutnorm}
\eeq
where $C$ is an arbitrary system, $\hat{n}_{A}$ is the number operator only for subsystem ${A}$, and the supremum is calculated over all physical states $\rho$ of $AC$ such that the energy of the reduced state $\rho_{A}$ respects the indicated bound.
\end{definition}
In the above, we indicate with $\|X\|_1$ the trace norm of an operator. Given two states $\sigma_0$ and $\sigma_1$ of a system $S$, the trace norm of their difference, $\|\sigma_0-\sigma_1\|_1$, is directly linked to the ability to discriminate whether $S$ was prepared in either $\sigma_0$ or $\sigma_1$~\cite{Nielsen:2011aa}. The meaning of the energy-constrained diamond norm is then that of providing a measure of distinguishability of two evolutions $\Lambda_0$ and $\Lambda_1$, by considering in the above definition $\Lambda = \Lambda_0-\Lambda_1$ and an input state $\rho$ that is limited in energy. Based on this definition, we can prove the following:

\begin{theorem}\label{th:main1} Let $\Lambda : \mathcal{D}(A) \rightarrow \mathcal{D}(B)$ be a cptp map. Define $\Lambda_j := \Tr_{ \backslash B_j} \circ \Lambda$ as the effective dynamics from $\mathcal{D}(A)$ to $\mathcal{D}(B_j)$ and fix a number $0 < \delta <1$. Then there exists a POVM $\{M_k\}$ and a set $S \subseteq \{1,...,N\}$ with $|S|\geq(1-\delta)N$ such that, for all $j \in S$ (and for some finite energy $\bar{n}$), we have
\begin{align}
\label{eq:mainbound}
\|\Lambda_j - \cE_j  \|_{\Diamond\bar{n}}\leq&\frac{17}{\delta} \left(\frac{2}{7}\right)^{\tfrac{14}{17}}\left(\frac{\bar{n}^7}{N}\right)^{\tfrac{1}{17} }\approx {6.06\over \delta } \left(\frac{\bar{n}^7}{N}\right)^{\tfrac{1}{17} }\,,
\end{align}
where the measure-and-prepare channel $\cE_j$ is given by
\beq
\cE_j(X) := \sum_k \Tr (M_k X) \sigma_{j,k},
\eeq
for states $\sigma_{j,k} \in \mathcal{D}(B_j)$. Here both spaces of $A$ and $B$ can have infinite dimensions.
\end{theorem}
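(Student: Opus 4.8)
The plan is to reduce the infinite-dimensional statement to the finite-dimensional theorem of Brand\~ao et al.~\cite{Brandao2015QD} by using the energy constraint to truncate the system $A$ to a finite-dimensional subspace, and then to optimise over the truncation dimension. First I would let $P_d$ be the projector onto the span of the $d$ lowest eigenstates of $\hat n_A$. Because $\hat n_A\ge d(\eins-P_d)$, any admissible input satisfies $\bar n\ge\Tr(\rho_A\hat n_A)\ge d\,\Tr\!\big(\rho_A(\eins-P_d)\big)$, so Markov's inequality gives $\Tr\!\big(\rho_A(\eins-P_d)\big)\le\bar n/d$. A gentle-measurement estimate then yields $\big\|\rho_{AC}-(P_d\otimes\eins)\rho_{AC}(P_d\otimes\eins)\big\|_1\le 2\sqrt{\bar n/d}$, uniformly over the reference $C$, which shows that the energy-constrained diamond norm of any channel difference is controlled, up to an additive error $O(\sqrt{\bar n/d})$, by the ordinary diamond norm of the same channels restricted to the $d$-dimensional cut-off subspace.

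Next I would invoke \cite{Brandao2015QD} on the cut-off subspace: this supplies a POVM $\{M_k\}$, which I extend to all of $A$ by appending the outcome $\eins-P_d$, together with states $\sigma_{j,k}$ defining $\cE_j(X)=\sum_k\Tr(M_kX)\sigma_{j,k}$, such that the fragment-averaged deviation $\tfrac1N\sum_j\|\Lambda_j-\cE_j\|_\Diamond$ on the cut-off subspace is bounded by a quantity of the form $K_2(d^{7}/N)^{1/3}$ that is polynomial in the truncation dimension and decays as a power of $N$; crucially this bound depends only on $\dim A$ after truncation, so the fragments $B_j$ stay unconstrained. Combining this with the truncation error through the triangle inequality for $\|\cdot\|_{\Diamond\bar n}$ gives, for the fragment average,
\beq
\frac1N\sum_{j}\|\Lambda_j-\cE_j\|_{\Diamond\bar n}\le K_1\sqrt{\bar n/d}+K_2\,(d^{7}/N)^{1/3}.
\eeq

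The two terms pull in opposite directions in $d$, and I would minimise their sum: the truncation term scales as $d^{-1/2}$ and the finite-dimensional term as $d^{7/3}$, so the stationarity condition fixes $d\sim\bar n^{3/17}N^{2/17}$ and collapses the right-hand side to a single power law $\propto(\bar n^{7}/N)^{1/17}$, the combination of the exponents $\tfrac12$ and $\tfrac73$ being precisely what produces the powers $7$ and $1/17$ and the prefactor $17\,(2/7)^{14/17}$. Finally I would pass from the average to individual fragments by a Markov argument: since the average is at most $M\propto(\bar n^{7}/N)^{1/17}$, at most $\delta N$ fragments can exceed $M/\delta$, so there is a set $S$ with $|S|\ge(1-\delta)N$ on which $\|\Lambda_j-\cE_j\|_{\Diamond\bar n}\le M/\delta$; this is exactly \eqref{eq:mainbound}, with the factor $1/\delta$ emerging cleanly from this last step.

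The hard part will be the truncation step in the energy-constrained setting. In contrast to the finite-dimensional case the cut-off acts on $A$ alone, while the norm is evaluated on $\rho_{AC}$ for an arbitrary, possibly infinite-dimensional and entangled, reference $C$; I must therefore establish the gentle-measurement estimate uniformly over all $\rho_{AC}$ constrained only through the marginal condition $\Tr(\rho_A\hat n_A)\le\bar n$, and carefully relate the full energy-constrained deviation to the ordinary finite-dimensional deviation of the restricted channels (keeping track of the subnormalisation introduced by $P_d$). A secondary point is to verify that the POVM and the prepared states returned by \cite{Brandao2015QD} on the $d$-dimensional subspace assemble into a genuine measure-and-prepare channel on the whole of $A$, so that the resulting $\cE_j$ is bona fide despite being built on a finite cut-off. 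Once these uniformity and lifting issues are settled, the optimisation over $d$ and the concluding Markov step are routine.
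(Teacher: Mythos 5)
Your proposal is correct and follows essentially the same route as the paper's own proof: Fock-space truncation of $A$ justified by the energy constraint plus the gentle-measurement lemma (giving the $4\sqrt{\bar n/d}$ error with $\|\Lambda_0-\Lambda_1\|_\Diamond\le 2$), reduction to the finite-dimensional machinery of Brand\~ao et al.\ on the cut-off subspace (the paper packages this as a truncated Choi--Jamio{\l}kowski operator and a truncated version of their Lemma 2, including the same extension of $\cE_j$ beyond $\Pi_d$ and the same average-then-Markov structure with the $2m/N$ sampling term), followed by the $\ln d\le d$ simplification and optimisation over $d$ yielding exactly $17(2/7)^{14/17}(\bar n^7/N)^{1/17}$. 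The points you flag as delicate (uniformity over the infinite-dimensional reference $C$, subnormalisation, and lifting the POVM to all of $A$) are precisely the ones the paper addresses via its Remark 1, Proposition 1, and the explicit completion of $\cE_j$ on the orthogonal complement.
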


We prove this result (\textit{mean energy bound} for brevity) in Appendix~\ref{AppA}, focusing here on its interpretation. As explained in introducing the energy-constrained diamond norm, the LHS of Eq.~(\ref{eq:mainbound}) is a measure of how well the two channels $\Lambda_j$ and $\cE_j$ can be distinguished.
We see that, for a fixed $\delta$ and $\bar{n}$, the two channels become indistinguishable as the number of environment fragments $N$ grows large. This holds for a fraction $1-\delta$ of the environment fragments; $\delta$ can be set as close to $0$ as required, but this in turn affects the RHS and hence the minimum value of $N$ providing a meaningful bound. Putting this together, we see that, for at least a fraction $1-\delta $ of the sub-environments, any quantum channel from $A$ to $B_j$ becomes arbitrarily close to a measure-and-prepare (also known as entanglement-breaking~\cite{Horodecki_Shor_Ruskai_2003}) channel. Moreover, the measured observable is the same for all these environment fragments and hence is objective --- any observer who wishes to probe system $A$ by measuring a fragment $B_j\in S$ can \textit{at most} gain information about the single POVM $\{M_k\}$.


The RHS of Eq.~\eqref{eq:mainbound}, however, tends to zero very slowly with $N$, and therefore a huge number of environment fragments are needed to give an informative result. We illustrate this in Fig.~\ref{bound_scarso}, where we set $\bar{n} = 1$ and $\delta=0.01$ (i.e. we ask whether objectivity of observables holds for 99\% of the observers). We note that the unfavourable $N^{-1/{17}}$ scaling can be improved slightly: in order to obtain a neat analytical result for Theorem~\ref{th:main1} some approximations were taken, but a slightly tighter bound can be found numerically (see Appendix). For the case $\bar n=1$, for example, we numerically obtain a scaling closer to $N^{-1/{15}}$. Yet, even with this improved result the upper bound of $\|\Lambda_j - \cE_j  \|_{\Diamond\bar{n}}$ remains of the order of $0.1$ when $N=10^{60}.$

\begin{figure}[t!]
	\begin{center}
	\includegraphics[width=.8\linewidth]{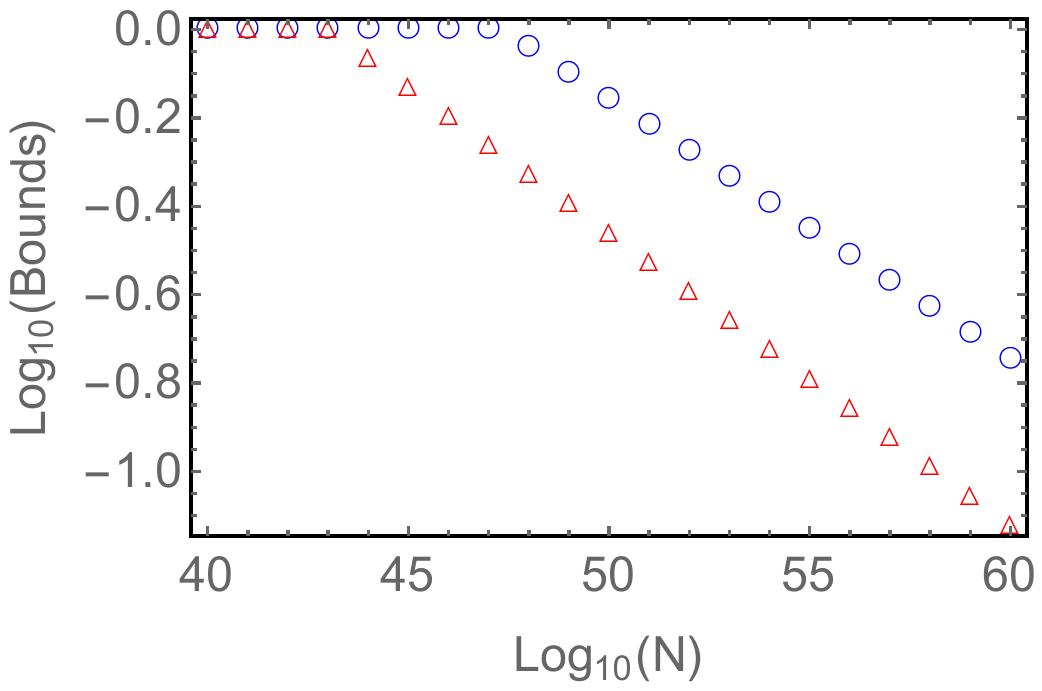}
	\end{center}
	\caption{Mean energy bound example. We plot the upper bound on $\|\Lambda_j - \cE_j  \|_{\Diamond\bar{n}}$ for $\bar n=1$ and $\delta=0.01$, as provided by Eq.~\eqref{eq:mainbound} (blue circles) and by a numerically optimized bound (red triangles, see Appendix). A power-law fit of the numerical bound provides $\|\Lambda_j - \cE_j  \|_{\Diamond\bar{n}}\leq\tfrac{1}{\delta}\beta N^{-\frac{1}{\alpha}}$ with $\beta\simeq 6.94,\alpha\simeq15.18$. \label{bound_scarso}}
\end{figure}

Bounded energy states arguably include all physically realizable states of a continuous-variable system (which we assume can be written in a single-mode Fock basis). Despite this, due to the unfavourable scaling of the bound with the number $N$ of environments, Theorem~\ref{th:main1} is of little practical use to show that objectivity of observables is generic in all such cases. 
To obtain a more informative bound, we can restrict further the class of states that are allowed in our diamond norm definition. In the following we will see that a more informative bound, admitting a vastly improved scaling with $N$, can be proved under such a restriction. The class of states we shall consider contains all the continuous variable density matrices with an\textit{ exponential energy cut-off} in subsystem $A$. Specifically, we restrict to all density matrices $\rho_{AC}$ such that
\begin{equation}\label{eq:cutoff}
\Tr\left[ \rho e^{\omega \hat{n}_{A}}\right] \leq \Omega\,,
\end{equation}
where $\Omega > 1$ and $\omega>0$ are given constants. We show below that the exponential cut-off states include meaningful subsets of single-mode bosonic Gaussian states, which play a focal role in continuous variable quantum information \cite{Adesso2014}. It is also trivial to show that any subset of states that can be written as a finite expansion in the Fock basis (up to some upper state $\ket{n_{\sf max}}$) belongs to this class. We can now define another variant of the diamond norm, relevant when only states obeying Eq.~\eqref{eq:cutoff} may be exploited to distinguish between channels.
\begin{definition} (Exponential cut-off diamond norm) For a Hermiticity-preserving linear map $\Lambda: \mathcal{D}(A) \rightarrow \mathcal{D}(B)$, and constants $\omega > 0$, $\Omega>1$, let
\beq \label{exp_cutoff}
\|\Lambda\|_{\Diamond\omega,\Omega}:=
\sup_{
	\substack{
		\Tr\left[ \rho e^{\omega \hat{n}_{A}} \right] \leq \Omega
	}
} \|\Lambda_{A} \otimes \id_C[\rho]\|_1 .
\eeq
where $C$ is an arbitrary ancillary system, $\hat{n}_{A}$ is the number operator only for subsystem $A$, and the supremum is calculated over all physical states $\rho$ of $AC$ such that the reduced state $\rho_{A'}$ respects the indicated bound.
\end{definition}
We can then prove the following:


\begin{theorem}\label{th:main2}
Let $\Lambda : \mathcal{D}(A) \rightarrow \mathcal{D}(B)$ be a cptp map. Define $\Lambda_j := \Tr_{ \backslash B_j} \circ \Lambda$ as the effective dynamics from $\mathcal{D}(A)$ to $\mathcal{D}(B_j)$ and fix a number $0<\delta <1$. Then there exists a POVM $\{M_k\}$ and a set $S \subseteq \{1,...,N\}$ with $|S| \geq (1-\delta)N$ such that, for all $j \in S$ (and for some finite $\omega>0$ and $\Omega>1$), we have
\begin{align}\label{expbound}
\|\Lambda_j - \cE_j  \|_{\diamond \omega,\Omega} &\leq
\frac{8}{\delta} \left(\frac{\gamma_1}{N}\right)^{1/3}\left[1+\frac14\big(\ln(\gamma_2 N)\big)^{4/3}\right]\,,
\end{align}
where the measure-and-prepare channel $\cE_j$ is given in Theorem~\ref{th:main1} and where
\begin{align}
\begin{split}
&\gamma_1 = \frac{2\tilde{d}^2s}{3\omega^4}\,, \quad \gamma_2=\frac{3 \tilde{d} \omega^4}{16s}\,, \\
&\tilde{n}=\displaystyle{1\over e^{\omega}-1}\,, \quad s= (\tilde{n}+1)\ln (\tilde{n}+1) - \tilde{n} \ln \tilde{n}\,.
\end{split}
\end{align}	
\end{theorem}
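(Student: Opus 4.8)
The plan is to reduce the infinite-dimensional statement to a finite-dimensional one by truncating the input on $A$ to a finite number of Fock levels, to invoke an entropy-sensitive version of the finite-dimensional objectivity bound of Brand\~ao et al.~\cite{Brandao2015QD} on the truncated channel, and then to optimise the truncation level. The decisive difference with the mean-energy setting of Theorem~\ref{th:main1} is quantitative: under the exponential cut-off the error incurred by truncation decays \emph{exponentially} in the truncation dimension, so this dimension need only grow \emph{logarithmically} with $N$ rather than polynomially. This is exactly what upgrades the $N^{-1/17}$ scaling to the near-optimal $N^{-1/3}$, at the price of the polylogarithmic factor $\big(\ln(\gamma_2 N)\big)^{4/3}$.

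First I would make the truncation rigorous inside $\|\cdot\|_{\diamond\omega,\Omega}$. Fix any admissible input $\rho=\rho_{AC}$ with $\Tr[\rho\,e^{\omega\hat{n}_A}]\le\Omega$ and let $P$ project $A$ onto its lowest $L$ Fock levels. Markov's inequality applied to the operator $e^{\omega\hat{n}}$ controls the tail weight, $\Tr[(\eins-P)\rho_A]\le\Omega\,e^{-\omega L}$, whence the gentle-measurement lemma gives a normalised truncation $\rho'\propto(P\otimes\id_C)\rho(P\otimes\id_C)$ with $\|\rho-\rho'\|_1\lesssim\sqrt{\Omega}\,e^{-\omega L/2}$. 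Because $\Lambda_j-\cE_j$ is a difference of channels, its unconstrained diamond norm is at most $2$, so the triangle inequality splits the quantity of interest into this truncation error plus the action on $\rho'$, whose $A$-support now lives in an $L$-dimensional space while the ancilla $C$ and the outputs $B_j$ remain arbitrary.

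On $\rho'$ the map behaves as an effectively $L$-dimensional channel, so I would feed it into the finite-dimensional redundancy argument, which simultaneously extracts a \emph{single} POVM $\{M_k\}$ shared by all fragments and bounds the fragment average $\tfrac{1}{N}\sum_j\|\Lambda_j-\cE_j\|_\diamond$. Here the cut-off pays a second dividend: the information that can be redundantly broadcast is governed not by $\ln L$ but by the entropy of admissible inputs, which the thermal bound $S(\rho)\le\omega\,\Tr[\rho\hat{n}]+\ln Z_\omega$ controls through the reference Gibbs state $e^{-\omega\hat{n}}/Z_\omega$. Its mean occupation is the Bose--Einstein value $\tilde{n}=1/(e^\omega-1)$ and its von Neumann entropy is exactly $s=(\tilde{n}+1)\ln(\tilde{n}+1)-\tilde{n}\ln\tilde{n}$; it is this entropy, rather than a raw dimension factor, that propagates into $\gamma_1$ and $\gamma_2$. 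The finite-dimensional bound therefore contributes a term growing only polynomially in $L$ and linearly in $s$, of the schematic form $(c\,\tilde{d}^{\,2}s\,L^{4}/N)^{1/3}$, the fourth power of $L$ being what ultimately produces the $4/3$ exponent.

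Finally I would optimise. The averaged bound reads, schematically, $\sqrt{\Omega}\,e^{-\omega L/2}+(c\,\tilde{d}^{\,2}s\,L^{4}/N)^{1/3}$; minimising over $L$ pins the truncation level at $L^\ast\simeq\tfrac{2}{3\omega}\ln(\gamma_2 N)$, and resubstitution collapses everything into the stated factor $(\gamma_1/N)^{1/3}\big[1+\tfrac{1}{4}(\ln(\gamma_2 N))^{4/3}\big]$, with the $\Omega$-dependence surviving only inside the logarithm and hence dominated for large $N$. A concluding application of Markov's inequality across the fragments turns the average into a statement valid on a set $S$ with $|S|\ge(1-\delta)N$, yielding the prefactor $8/\delta$. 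I expect the main obstacle to be twofold: pushing the truncation through the energy-constrained diamond norm with an arbitrary, possibly infinite-dimensional ancilla $C$ --- and extending the finite-dimensional POVM and prepared states $\sigma_{j,k}$ back to the full space without degrading the bound --- and establishing the entropy-sensitive form of the finite-dimensional objectivity bound, since it is precisely the replacement of $\ln L$ by $s$ together with the bookkeeping of dimensional powers that fixes the constants $\gamma_1,\gamma_2$ and the characteristic $4/3$ exponent.
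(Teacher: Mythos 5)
Your high-level skeleton (exponentially small truncation error, an entropy-sensitive version of the Brand\~{a}o et al.\ bound, a truncation level growing like $\ln N/\omega$, then Markov's inequality over fragments) does coincide with the paper's, but the two quantitative ingredients you defer as ``obstacles'' are precisely the content of the proof, and the mechanism you propose --- truncate the \emph{input} to $L$ Fock levels and feed the restricted channel into a finite-dimensional theorem --- cannot deliver them. First, the factor $\tilde d=\Omega e^{\omega}/(e^{\omega}-1)$: in any dimension-based argument, converting Choi-state trace distance into diamond-norm distance costs a factor equal to the input dimension, i.e.\ $L$, not $\tilde d$; combined with the $L^2$ measurement-lifting factor this puts $L^{6}\log L$ inside the cube root, and with $L\sim \ln N$ you land at a polylog of order $(\ln N)^{2}$, not the claimed $\tilde d^{\,2}s\,L^{4}$ with its $(\ln(\gamma_2 N))^{4/3}$ factor. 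The paper obtains $\tilde d$ with \emph{no truncation of the input at all} (Lemma~\ref{exp_cut_lem2}): it defines a modified Choi--Jamio{\l}kowski state on the two-mode squeezed vector $\ket{\phi}\propto\sum_j e^{-\omega j/2}\ket{j,j}$ and shows that every cut-off-admissible pure input can be written as $(C\otimes\bbI)\ket{\phi}$ for a local filter with $\|C\|_\infty^2\le\Omega/(1-e^{-\omega})=\tilde d$, whence $\|\Lambda_0-\Lambda_1\|_{\diamond\omega,\Omega}\le\tilde d\,\|J(\Lambda_0)-J(\Lambda_1)\|_1$, a conversion whose cost is independent of any dimension. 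Second, the entropy $s$: in Brand\~{a}o's information-theoretic argument the quantity replacing $\log d_A$ is the entropy of the $A$-reduction of the \emph{reference} Choi state (the fixed bipartite state on which the chain rule over conditional mutual informations is run, and from which the universal, input-independent POVM is extracted), not the entropy of admissible inputs; so your thermal bound $S(\rho)\le\omega\Tr[\rho\hat n]+\ln Z_\omega$ on inputs is aimed at the wrong object. The paper gets $S(A)=s$ exactly because the reduction of $\ket{\phi}$ \emph{is} the Gibbs state $\gamma(\omega)$ (Lemma~\ref{mutual} and Corollary~\ref{cor:brandao1}).

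Two further consequences of working with $\ket{\phi}$ that your route misses. (i) The truncation in the paper is applied to the \emph{Choi state}, not the input, and only in order to lift measured distinguishability to trace-norm distinguishability (Lemma~\ref{trunc_1}); its exponential smallness, $\Tr(\rho_d)=1-e^{-\omega d}$, comes from the Schmidt coefficients of $\ket{\phi}$ rather than from the input energy constraint. (ii) Because those Schmidt coefficients are non-uniform, the POVM and the measure-and-prepare channel must be corrected by the filter $O=\sum_i e^{\omega i/2}\ket{i}\bra{i}$, as in Eq.~(\ref{mp1}) and Appendix~\ref{AppApp}, before one can verify that $\{M_z\}$ is complete and that $\bbE_z\rho_A^z\otimes\rho_{B_j}^z$ is indeed the modified Choi state of $\cE_j$; simply ``extending a finite-dimensional POVM to the full space'' does not address this. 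Your input-truncation argument is not wasted --- carried out honestly with dimension-based constants it yields a correct bound of the weaker form $N^{-1/3}(\ln N)^{2}$ --- but as written it asserts rather than derives the constants $\tilde d$, $s$, and the $4/3$ exponent, so it does not prove the theorem as stated.
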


We prove this in Appendix~\ref{AppB}. Theorem~\ref{th:main2} can be interpreted similarly to Theorem~\ref{th:main1}: with increasing $N$, provided the available resources obey the exponential cut-off condition, any quantum channel from $A$ to a generic environment fragment $B_j\in S$ becomes arbitrarily close to the measure-and-prepare channel specified by the measurement of $\{M_k\}$ (again, the same measurement for all $j'$s). Let us remark, however, that the dominant scaling with $N$ is now $\propto(\ln{N})^{4/3}/N^{1/3} $, which for $N\to\infty$ converges to zero significantly faster than the right hand side of Eq.~\eqref{eq:mainbound}.
It is instructive also to compare our results to the results in \cite{Brandao2015QD}: their bound scales as $1/N^{1/3} $ and therefore approaches $0$ faster than our bounds in the limit $N \gg \infty$. However, with a straightforward modification of the proof of Theorem~\ref{th:main2} we can derive a new bound applicable to large (yet finite) dimensions, under the further assumption of an exponential energy cut-off as in Eq.~\eqref{eq:cutoff}. Such a bound can potentially be more informative than the one from \cite{Brandao2015QD} at finite (but still large) values of $N$, if the system satisfies Eq.~(\ref{eq:cutoff}) with suitable values of $\omega,\Omega$.

It is important at this point to provide relevant examples of quantum states that satisfy the exponential energy cut-off condition. Let us consider the case in which the reduced density matrix of $A$ is an arbitrary mixed single-mode Gaussian state $\rho_G$ \cite{Adesso2014}, specified by a displacement vector ${\bf d} = \sqrt 2 \{\Re (\alpha), \Im (\alpha)\}$ with $\alpha \in \mathds{C}$, and by a covariance matrix ${\bf V}= \mbox{diag}\{e^{2r}(2m+1), e^{-2r}(2m+1)\}$, where $m \geq 0$ is the mean number of thermal photons and we can fix without any loss of generality a squeezing parameter $r >0$. Note that we may assume the covariance matrix to be in diagonal form, since diagonalization can always be achieved via a phase rotation commuting with $\hat n$. By means of the Husimi function $Q(\beta)= \pi^{-1} \langle \beta|\rho|\beta\rangle$, where $\{|\beta\rangle\}$ (with $\beta \in \mathds{C}$) is the overcomplete basis of coherent states, we can evaluate the LHS~of Eq.~(\ref{eq:cutoff}) analytically, using the formula $\Tr\left[ \rho e^{\omega \hat{n}_{A}}\right] =e^{-\omega} \int_{\mathds{C}} d^2 \beta\  Q(\beta) e^{(1-e^{-\omega})|\beta|^2}$, which may be derived by anti-normally ordering the operator $e^{\omega \hat{n}_{A}}$ \cite{barnett2002methods}. We then find that a Gaussian state $\rho_G$ satisfies the exponential cut-off condition if and only if
\begin{equation}\label{eq:cutoffgaussian}
\av{e^{\omega \hat n}}=\frac{2 \exp\left[{\frac{2\Re(\alpha)^2}{\kappa_+^2}+\frac{2\Im(\alpha)^2}{\kappa_-^2}}\right]}{(e^\omega-1)\kappa_+ \kappa_-} \leq \Omega\,,
\end{equation}
with $\kappa_\pm=\sqrt{\coth \left(\frac{\omega }{2}\right)-(2 m+1) e^{\pm 2 r}}$. In Appendix~\ref{AppC}, we exploit this formula to show that any subset of Gaussian states with bounded energy, i.e. ${\cal G}_{\bar n}=\{\rho_G\vert \Tr[{\rho_G\hat n}]\leq \bar{n}\}$, obeys the desired cut-off condition whenever the parameters $\omega,\Omega$ satisfy
\begin{align}
\!\!\!\!\Omega&> 1/(1-\epsilon),\\
\!\!\!\!\omega&=\min\left\{ \frac{2\epsilon}{3/2+2\bar n(2+\bar n)}\, ,\,\frac{1-\epsilon}{\bar n}\ln\big((1-\epsilon)\Omega\big) \right\},
\end{align}
where $0<\epsilon<1$ is arbitrary parameter that can be tuned to optimize the resulting exponential cut-off bound. Note also that, once the relevant parameters have been fixed according to the above discussion, the entire convex hull of ${\cal G}_{\bar n}$ will also satisfy the exponential cut-off condition. For example, suppose one would like to distinguish between $\Lambda_j$ and $\mathcal{E}_j$, only being able to prepare mixtures of Gaussians with $\av{\hat n_{A}}\leq 1$. Then the results of Theorem~\ref{th:main2} (\textit{exponential cut-off bound} for brevity) would apply, giving us much tighter constraints on the emergence of objectivity as compared to what Theorem~\ref{th:main1} would tell us under the same hypothesis. Furthermore, also in this case we can consider a numerical optimization yielding an improved upper bound for the RHS of Eq.~\eqref{expbound} --- see Appendix. This is shown in Fig.~\ref{bound_mejo}, where fixing $\delta=0.01$ and $\bar n=1$, as before, we obtain $ \|\Lambda_j - \cE_j  \|_{\diamond \omega,\Omega}<0.5\times10^{-3}$ already for $N=10^{29}$. \\

\begin{figure}[t!]
	\begin{center}
		\includegraphics[width=.8\linewidth]{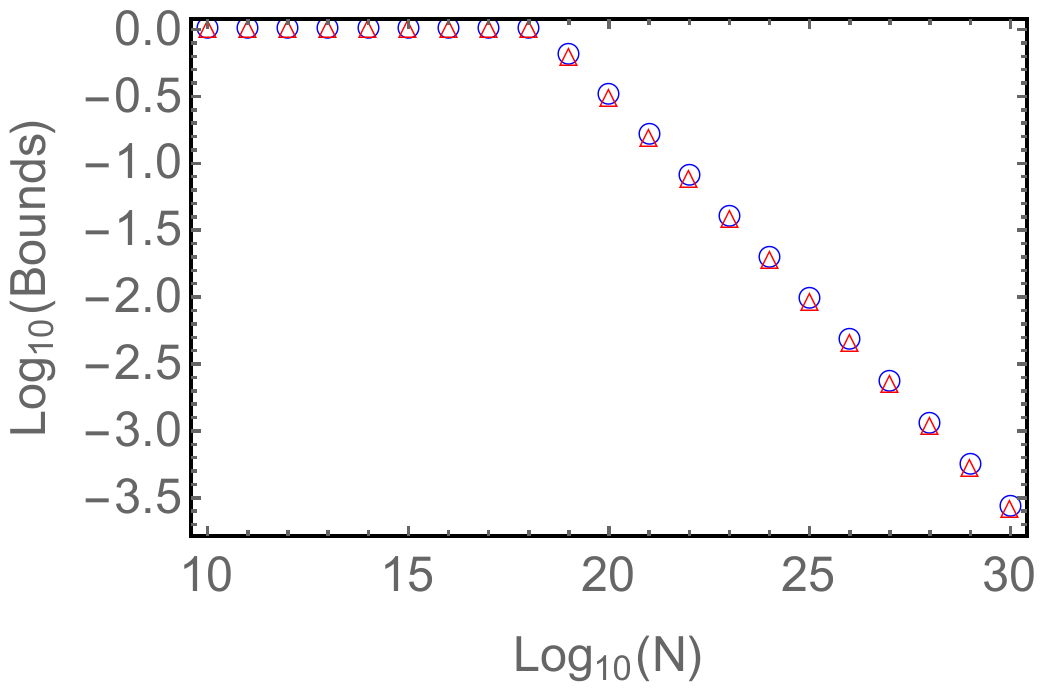}
	\end{center}
	\caption{Exponential cut-off bound example. We take $\delta=0.01$ as in Fig.~\ref{bound_scarso}. The parameters $\omega,\Omega$ are optimized for each $N$ to provide the best possible bound attainable through our methods, assuming a set of Gaussian resource states ${\cal G}_{\bar n}$ with ${\bar n}=1$ (these would be the input states that can be used to discriminate between channels). Blue circles indicate the RHS of Eq.~\eqref{expbound}, while red triangles refer to a numerically optimized bound (see Appendix). A power-law fit of the latter yields $\|\Lambda_j - \cE_j  \|_{\diamond \omega,\Omega}\leq\tfrac{1}{\delta}\beta N^{-\frac{1}{\alpha}}$ with $\beta\simeq 5839,\alpha\simeq3.20$. \label{bound_mejo}}
\end{figure}

It is remarkable that Theorems~\ref{th:main1} and \ref{th:main2} hold for \emph{any} channel $\Lambda$. These results, together with \cite{Brandao2015QD}, show that objectivity of observables is built into the basic mathematical structure of quantum mechanics. Despite this, our analysis suggests that objectivity of observables may emerge extremely slowly in the absence of further restrictions on the system's properties and/or the form of its interaction with the environment: generally a vast number of environment fragments is needed to have truly informative bounds. This might simply be due to our energy-based bounds not being tight, and indeed future work should go towards improving such bounds. There are nonetheless a number of other directions for future studies.

Firstly, following our approach, further restrictions can be placed on the set of states defining the generalized diamond norms. A bound specifically for Gaussian states would be an interesting next step, as would be a bound that combines \cite{Brandao2015QD} with our results to consider, for example, finite dimensional systems with a fixed energy. It would also be instructive to consider which states are responsible for allowing the channels in Theorems~\ref{th:main1} and \ref{th:main2} to be easily distinguished. Do we expect these states to be realistic, experimentally producible states? Alternatively, restrictions can be placed on the measurements available to distinguish between the channels; considering only coarse-grained measurements would bring us closer to real-world scenarios where objectivity emerges.

A different line of attack would be to place restrictions on the allowed channels. One of the big questions that quantum Darwinism addresses is: why does our macroscopic every-day world appear objective and classical, despite being constructed of quantum mechanical particles? The properties of these particles, at first sight, seem to be far from objective. With this in mind we may ask: What general properties are shared by physically meaningful channels? If all relevant interactions between system and environment are ultimately due to, for example, a combination of one- and two-body Hamiltonians, the resulting channels may display some nontrivial structure. Less ambitiously, one may place further intuitive restrictions on the channels, such as the conservation of a global number operator of the form $\hat n_A+\hat n_{B_1}+...+\hat{n}_{B_N}$ or other symmetry constraints.
By investigating these generalizations of our results, it may be possible to further clarify why our macroscopic world is ``classical'', without resorting to the task of constructing macroscopic models and simulations, the like of which will only be possible with macroscopic-scale quantum computers. \\

\begin{acknowledgments}
We thank Joseph Hollis for the artwork featured in Figure~1.
We acknowledge discussions with F. G. S. L. Brand\~ao, L. Lami, and A. Winter.
 This work was supported by the Foundational Questions Institute (fqxi.org) under the Physics of the Observer Programme (Grant No.~FQXi-RFP-1601), the European Research Council (ERC) under the Starting Grant GQCOP (Grant No.~637352), and the European Union’s Horizon 2020 Research and Innovation Programme under the Marie Skłodowska-Curie Action OPERACQC (Grant Agreement No. 661338). T.T.
 acknowledges financial support from the University of Nottingham via a Nottingham Research Fellowship.
\end{acknowledgments}

\bibliographystyle{apsrevfixedwithtitles}
\bibliography{QDinf}

\begin{thebibliography}{37}%
\makeatletter
\providecommand \@ifxundefined [1]{%
 \@ifx{#1\undefined}
}%
\providecommand \@ifnum [1]{%
 \ifnum #1\expandafter \@firstoftwo
 \else \expandafter \@secondoftwo
 \fi
}%
\providecommand \@ifx [1]{%
 \ifx #1\expandafter \@firstoftwo
 \else \expandafter \@secondoftwo
 \fi
}%
\providecommand \natexlab [1]{#1}%
\providecommand \enquote  [1]{``#1''}%
\providecommand \bibnamefont  [1]{#1}%
\providecommand \bibfnamefont [1]{#1}%
\providecommand \citenamefont [1]{#1}%
\providecommand \href@noop [0]{\@secondoftwo}%
\providecommand \href [0]{\begingroup \@sanitize@url \@href}%
\providecommand \@href[1]{\@@startlink{#1}\@@href}%
\providecommand \@@href[1]{\endgroup#1\@@endlink}%
\providecommand \@sanitize@url [0]{\catcode `\\12\catcode `\$12\catcode
  `\&12\catcode `\#12\catcode `\^12\catcode `\_12\catcode `\%12\relax}%
\providecommand \@@startlink[1]{}%
\providecommand \@@endlink[0]{}%
\providecommand \url  [0]{\begingroup\@sanitize@url \@url }%
\providecommand \@url [1]{\endgroup\@href {#1}{\urlprefix }}%
\providecommand \urlprefix  [0]{URL }%
\providecommand \Eprint [0]{\href }%
\providecommand \doibase [0]{http://dx.doi.org/}%
\providecommand \selectlanguage [0]{\@gobble}%
\providecommand \bibinfo  [0]{\@secondoftwo}%
\providecommand \bibfield  [0]{\@secondoftwo}%
\providecommand \translation [1]{[#1]}%
\providecommand \BibitemOpen [0]{}%
\providecommand \bibitemStop [0]{}%
\providecommand \bibitemNoStop [0]{.\EOS\space}%
\providecommand \EOS [0]{\spacefactor3000\relax}%
\providecommand \BibitemShut  [1]{\csname bibitem#1\endcsname}%
\let\auto@bib@innerbib\@empty
\bibitem [{\citenamefont {Zurek}(2003)}]{ZurekRMP}%
  \BibitemOpen
  \bibfield  {author} {\bibinfo {author} {\bibfnamefont {W.~H.}\ \bibnamefont
  {Zurek}},\ }\bibfield  {title} {\enquote {\bibinfo {title} {Decoherence,
  einselection, and the quantum origins of the classical},}\ }\href {\doibase
  10.1103/RevModPhys.75.715} {\bibfield  {journal} {\bibinfo  {journal} {Rev.
  Mod. Phys.}\ }\textbf {\bibinfo {volume} {75}},\ \bibinfo {pages} {715}
  (\bibinfo {year} {2003})}\BibitemShut {NoStop}%
\bibitem [{\citenamefont {Schlosshauer}(2005)}]{SchlosshauerRMP}%
  \BibitemOpen
  \bibfield  {author} {\bibinfo {author} {\bibfnamefont {M.}~\bibnamefont
  {Schlosshauer}},\ }\bibfield  {title} {\enquote {\bibinfo {title}
  {Decoherence, the measurement problem, and interpretations of quantum
  mechanics},}\ }\href {\doibase 10.1103/RevModPhys.76.1267} {\bibfield
  {journal} {\bibinfo  {journal} {Rev. Mod. Phys.}\ }\textbf {\bibinfo {volume}
  {76}},\ \bibinfo {pages} {1267} (\bibinfo {year} {2005})}\BibitemShut
  {NoStop}%
\bibitem [{\citenamefont {Zurek}(2009)}]{ZurekQD}%
  \BibitemOpen
  \bibfield  {author} {\bibinfo {author} {\bibfnamefont {W.~H.}\ \bibnamefont
  {Zurek}},\ }\bibfield  {title} {\enquote {\bibinfo {title} {Quantum
  darwinism},}\ }\href {\doibase 10.1038/nphys1202} {\bibfield  {journal}
  {\bibinfo  {journal} {Nature Phys.}\ }\textbf {\bibinfo {volume} {5}},\
  \bibinfo {pages} {181} (\bibinfo {year} {2009})}\BibitemShut {NoStop}%
\bibitem [{\citenamefont {Ollivier}\ \emph {et~al.}(2004)\citenamefont
  {Ollivier}, \citenamefont {Poulin},\ and\ \citenamefont
  {Zurek}}]{Ollivier2004}%
  \BibitemOpen
  \bibfield  {author} {\bibinfo {author} {\bibfnamefont {H.}~\bibnamefont
  {Ollivier}}, \bibinfo {author} {\bibfnamefont {D.}~\bibnamefont {Poulin}}, \
  and\ \bibinfo {author} {\bibfnamefont {W.~H.}\ \bibnamefont {Zurek}},\
  }\bibfield  {title} {\enquote {\bibinfo {title} {Objective properties from
  subjective quantum states: Environment as a witness},}\ }\href {\doibase
  10.1103/PhysRevLett.93.220401} {\bibfield  {journal} {\bibinfo  {journal}
  {Phys. Rev. Lett.}\ }\textbf {\bibinfo {volume} {93}},\ \bibinfo {pages}
  {220401} (\bibinfo {year} {2004})}\BibitemShut {NoStop}%
\bibitem [{\citenamefont {Ollivier}\ \emph {et~al.}(2005)\citenamefont
  {Ollivier}, \citenamefont {Poulin},\ and\ \citenamefont
  {Zurek}}]{Ollivier2005}%
  \BibitemOpen
  \bibfield  {author} {\bibinfo {author} {\bibfnamefont {H.}~\bibnamefont
  {Ollivier}}, \bibinfo {author} {\bibfnamefont {D.}~\bibnamefont {Poulin}}, \
  and\ \bibinfo {author} {\bibfnamefont {W.~H.}\ \bibnamefont {Zurek}},\
  }\bibfield  {title} {\enquote {\bibinfo {title} {Environment as a witness:
  Selective proliferation of information and emergence of objectivity in a
  quantum universe},}\ }\href {\doibase 10.1103/PhysRevA.72.042113} {\bibfield
  {journal} {\bibinfo  {journal} {Phys. Rev. A}\ }\textbf {\bibinfo {volume}
  {72}},\ \bibinfo {pages} {042113} (\bibinfo {year} {2005})}\BibitemShut
  {NoStop}%
\bibitem [{\citenamefont {Blume-Kohout}\ and\ \citenamefont
  {Zurek}(2006)}]{Blume2006}%
  \BibitemOpen
  \bibfield  {author} {\bibinfo {author} {\bibfnamefont {R.}~\bibnamefont
  {Blume-Kohout}}\ and\ \bibinfo {author} {\bibfnamefont {W.~H.}\ \bibnamefont
  {Zurek}},\ }\bibfield  {title} {\enquote {\bibinfo {title} {Quantum
  darwinism: Entanglement, branches, and the emergent classicality of
  redundantly stored quantum information},}\ }\href {\doibase
  10.1103/PhysRevA.73.062310} {\bibfield  {journal} {\bibinfo  {journal} {Phys.
  Rev. A}\ }\textbf {\bibinfo {volume} {73}},\ \bibinfo {pages} {062310}
  (\bibinfo {year} {2006})}\BibitemShut {NoStop}%
\bibitem [{\citenamefont {Horodecki}\ \emph {et~al.}(2015)\citenamefont
  {Horodecki}, \citenamefont {Korbicz},\ and\ \citenamefont
  {Horodecki}}]{Horodecki2015}%
  \BibitemOpen
  \bibfield  {author} {\bibinfo {author} {\bibfnamefont {R.}~\bibnamefont
  {Horodecki}}, \bibinfo {author} {\bibfnamefont {J.~K.}\ \bibnamefont
  {Korbicz}}, \ and\ \bibinfo {author} {\bibfnamefont {P.}~\bibnamefont
  {Horodecki}},\ }\bibfield  {title} {\enquote {\bibinfo {title} {Quantum
  origins of objectivity},}\ }\href {\doibase 10.1103/PhysRevA.91.032122}
  {\bibfield  {journal} {\bibinfo  {journal} {Phys. Rev. A}\ }\textbf {\bibinfo
  {volume} {91}},\ \bibinfo {pages} {032122} (\bibinfo {year}
  {2015})}\BibitemShut {NoStop}%
\bibitem [{\citenamefont {Blume-Kohout}\ and\ \citenamefont
  {Zurek}(2008)}]{Blume2008}%
  \BibitemOpen
  \bibfield  {author} {\bibinfo {author} {\bibfnamefont {R.}~\bibnamefont
  {Blume-Kohout}}\ and\ \bibinfo {author} {\bibfnamefont {W.~H.}\ \bibnamefont
  {Zurek}},\ }\bibfield  {title} {\enquote {\bibinfo {title} {Quantum darwinism
  in quantum brownian motion},}\ }\href {\doibase
  10.1103/PhysRevLett.101.240405} {\bibfield  {journal} {\bibinfo  {journal}
  {Phys. Rev. Lett.}\ }\textbf {\bibinfo {volume} {101}},\ \bibinfo {pages}
  {240405} (\bibinfo {year} {2008})}\BibitemShut {NoStop}%
\bibitem [{\citenamefont {Zwolak}\ \emph {et~al.}(2009)\citenamefont {Zwolak},
  \citenamefont {Quan},\ and\ \citenamefont {Zurek}}]{Zwolak2009}%
  \BibitemOpen
  \bibfield  {author} {\bibinfo {author} {\bibfnamefont {M.}~\bibnamefont
  {Zwolak}}, \bibinfo {author} {\bibfnamefont {H.~T.}\ \bibnamefont {Quan}}, \
  and\ \bibinfo {author} {\bibfnamefont {W.~H.}\ \bibnamefont {Zurek}},\
  }\bibfield  {title} {\enquote {\bibinfo {title} {Quantum darwinism in a mixed
  environment},}\ }\href {\doibase 10.1103/PhysRevLett.103.110402} {\bibfield
  {journal} {\bibinfo  {journal} {Phys. Rev. Lett.}\ }\textbf {\bibinfo
  {volume} {103}},\ \bibinfo {pages} {110402} (\bibinfo {year}
  {2009})}\BibitemShut {NoStop}%
\bibitem [{\citenamefont {Riedel}\ and\ \citenamefont
  {Zurek}(2010)}]{Riedel2010}%
  \BibitemOpen
  \bibfield  {author} {\bibinfo {author} {\bibfnamefont {C.~J.}\ \bibnamefont
  {Riedel}}\ and\ \bibinfo {author} {\bibfnamefont {W.~H.}\ \bibnamefont
  {Zurek}},\ }\bibfield  {title} {\enquote {\bibinfo {title} {Quantum darwinism
  in an everyday environment: Huge redundancy in scattered photons},}\ }\href
  {\doibase 10.1103/PhysRevLett.105.020404} {\bibfield  {journal} {\bibinfo
  {journal} {Phys. Rev. Lett.}\ }\textbf {\bibinfo {volume} {105}},\ \bibinfo
  {pages} {020404} (\bibinfo {year} {2010})}\BibitemShut {NoStop}%
\bibitem [{\citenamefont {Riedel}\ \emph {et~al.}(2012)\citenamefont {Riedel},
  \citenamefont {Zurek},\ and\ \citenamefont {Zwolak}}]{Riedel2012}%
  \BibitemOpen
  \bibfield  {author} {\bibinfo {author} {\bibfnamefont {C.~J.}\ \bibnamefont
  {Riedel}}, \bibinfo {author} {\bibfnamefont {W.~H.}\ \bibnamefont {Zurek}}, \
  and\ \bibinfo {author} {\bibfnamefont {M.}~\bibnamefont {Zwolak}},\
  }\bibfield  {title} {\enquote {\bibinfo {title} {The rise and fall of
  redundancy in decoherence and quantum darwinism},}\ }\href {\doibase
  10.1088/1367-2630/14/8/083010} {\bibfield  {journal} {\bibinfo  {journal}
  {New J. Phys.}\ }\textbf {\bibinfo {volume} {14}},\ \bibinfo {pages} {083010}
  (\bibinfo {year} {2012})}\BibitemShut {NoStop}%
\bibitem [{\citenamefont {Galve}\ \emph {et~al.}(2015)\citenamefont {Galve},
  \citenamefont {Zambrini},\ and\ \citenamefont {Maniscalco}}]{Galve2015}%
  \BibitemOpen
  \bibfield  {author} {\bibinfo {author} {\bibfnamefont {F.}~\bibnamefont
  {Galve}}, \bibinfo {author} {\bibfnamefont {R.}~\bibnamefont {Zambrini}}, \
  and\ \bibinfo {author} {\bibfnamefont {S.}~\bibnamefont {Maniscalco}},\
  }\bibfield  {title} {\enquote {\bibinfo {title} {Non-markovianity hinders
  quantum darwinism},}\ }\href {\doibase 10.1038/srep19607} {\bibfield
  {journal} {\bibinfo  {journal} {Sci. Rep.}\ }\textbf {\bibinfo {volume}
  {6}},\ \bibinfo {pages} {19607} (\bibinfo {year} {2015})}\BibitemShut
  {NoStop}%
\bibitem [{\citenamefont {Balaneskovic}(2015)}]{Balaneskovic2015}%
  \BibitemOpen
  \bibfield  {author} {\bibinfo {author} {\bibfnamefont {N.}~\bibnamefont
  {Balaneskovic}},\ }\href@noop {} {\enquote {\bibinfo {title} {Random unitary
  evolution model of quantum darwinism with pure decoherence},}\ } (\bibinfo
  {year} {2015}),\ \bibinfo {note} {arXiv:1510.02386},\ \Eprint
  {http://arxiv.org/abs/1510.02386} {1510.02386} \BibitemShut {NoStop}%
\bibitem [{\citenamefont {Tuziemski}\ and\ \citenamefont
  {Korbicz}(2015{\natexlab{a}})}]{Tuziemski2015}%
  \BibitemOpen
  \bibfield  {author} {\bibinfo {author} {\bibfnamefont {J.}~\bibnamefont
  {Tuziemski}}\ and\ \bibinfo {author} {\bibfnamefont {J.~K.}\ \bibnamefont
  {Korbicz}},\ }\bibfield  {title} {\enquote {\bibinfo {title} {Dynamical
  objectivity in quantum brownian motion},}\ }\href {\doibase
  10.1209/0295-5075/112/40008} {\bibfield  {journal} {\bibinfo  {journal}
  {Europhys. Lett.}\ }\textbf {\bibinfo {volume} {112}},\ \bibinfo {pages}
  {40008} (\bibinfo {year} {2015}{\natexlab{a}})}\BibitemShut {NoStop}%
\bibitem [{\citenamefont {Tuziemski}\ and\ \citenamefont
  {Korbicz}(2015{\natexlab{b}})}]{Tuziemski2015b}%
  \BibitemOpen
  \bibfield  {author} {\bibinfo {author} {\bibfnamefont {J.}~\bibnamefont
  {Tuziemski}}\ and\ \bibinfo {author} {\bibfnamefont {J.~K.}\ \bibnamefont
  {Korbicz}},\ }\bibfield  {title} {\enquote {\bibinfo {title} {Objectivisation
  in simplified quantum brownian motion models},}\ }\href {\doibase
  10.3390/photonics2010228} {\bibfield  {journal} {\bibinfo  {journal}
  {Photonics}\ }\textbf {\bibinfo {volume} {2}},\ \bibinfo {pages} {228}
  (\bibinfo {year} {2015}{\natexlab{b}})}\BibitemShut {NoStop}%
\bibitem [{\citenamefont {Tuziemski}\ and\ \citenamefont
  {Korbicz}(2016)}]{Tuziemski2016}%
  \BibitemOpen
  \bibfield  {author} {\bibinfo {author} {\bibfnamefont {J.}~\bibnamefont
  {Tuziemski}}\ and\ \bibinfo {author} {\bibfnamefont {J.~K.}\ \bibnamefont
  {Korbicz}},\ }\bibfield  {title} {\enquote {\bibinfo {title} {Analytical
  studies of spectrum broadcast structures in quantum brownian motion},}\
  }\href {\doibase 10.1088/1751-8113/49/44/445301} {\bibfield  {journal}
  {\bibinfo  {journal} {J. Phys. A: Math. Theor.}\ }\textbf {\bibinfo {volume}
  {49}},\ \bibinfo {pages} {445301} (\bibinfo {year} {2016})}\BibitemShut
  {NoStop}%
\bibitem [{\citenamefont {Balaneskovic}\ and\ \citenamefont
  {Mendler}(2016)}]{Balaneskovic2016}%
  \BibitemOpen
  \bibfield  {author} {\bibinfo {author} {\bibfnamefont {N.}~\bibnamefont
  {Balaneskovic}}\ and\ \bibinfo {author} {\bibfnamefont {M.}~\bibnamefont
  {Mendler}},\ }\bibfield  {title} {\enquote {\bibinfo {title} {Dissipation,
  dephasing and quantum darwinism in qubit systems with random unitary
  interactions},}\ }\href {\doibase 10.1140/epjd/e2016-70174-9} {\bibfield
  {journal} {\bibinfo  {journal} {Eur. Phys. J. D}\ }\textbf {\bibinfo {volume}
  {70}},\ \bibinfo {pages} {177} (\bibinfo {year} {2016})}\BibitemShut
  {NoStop}%
\bibitem [{\citenamefont {Lampo}\ \emph {et~al.}(2017)\citenamefont {Lampo},
  \citenamefont {Tuziemski}, \citenamefont {Lewenstein},\ and\ \citenamefont
  {Korbicz}}]{Lampo2017}%
  \BibitemOpen
  \bibfield  {author} {\bibinfo {author} {\bibfnamefont {A.}~\bibnamefont
  {Lampo}}, \bibinfo {author} {\bibfnamefont {J.}~\bibnamefont {Tuziemski}},
  \bibinfo {author} {\bibfnamefont {M.}~\bibnamefont {Lewenstein}}, \ and\
  \bibinfo {author} {\bibfnamefont {J.~K.}\ \bibnamefont {Korbicz}},\
  }\bibfield  {title} {\enquote {\bibinfo {title} {Objectivity in the
  non-markovian spin-boson model},}\ }\href {\doibase
  10.1103/PhysRevA.96.012120} {\bibfield  {journal} {\bibinfo  {journal} {Phys.
  Rev. A}\ }\textbf {\bibinfo {volume} {96}},\ \bibinfo {pages} {012120}
  (\bibinfo {year} {2017})}\BibitemShut {NoStop}%
\bibitem [{\citenamefont {Pleasance}\ and\ \citenamefont
  {Garraway}(2017)}]{Pleasance2017}%
  \BibitemOpen
  \bibfield  {author} {\bibinfo {author} {\bibfnamefont {G.}~\bibnamefont
  {Pleasance}}\ and\ \bibinfo {author} {\bibfnamefont {B.~M.}\ \bibnamefont
  {Garraway}},\ }\href@noop {} {\enquote {\bibinfo {title} {{A}n application of
  quantum {D}arwinism to a structured environment},}\ } (\bibinfo {year}
  {2017}),\ \bibinfo {note} {arXiv:1711.03732v1},\ \Eprint
  {http://arxiv.org/abs/1711.03732} {1711.03732} \BibitemShut {NoStop}%
\bibitem [{\citenamefont {Brand{\~a}o}\ \emph {et~al.}(2015)\citenamefont
  {Brand{\~a}o}, \citenamefont {Piani},\ and\ \citenamefont
  {Horodecki}}]{Brandao2015QD}%
  \BibitemOpen
  \bibfield  {author} {\bibinfo {author} {\bibfnamefont {F.~G. S.~L.}\
  \bibnamefont {Brand{\~a}o}}, \bibinfo {author} {\bibfnamefont
  {M.}~\bibnamefont {Piani}}, \ and\ \bibinfo {author} {\bibfnamefont
  {P.}~\bibnamefont {Horodecki}},\ }\bibfield  {title} {\enquote {\bibinfo
  {title} {Generic emergence of classical features in quantum darwinism},}\
  }\href {\doibase 10.1038/ncomms8908} {\bibfield  {journal} {\bibinfo
  {journal} {Nature Commun.}\ }\textbf {\bibinfo {volume} {6}},\ \bibinfo
  {pages} {7908} (\bibinfo {year} {2015})}\BibitemShut {NoStop}%
\bibitem [{\citenamefont {Nielsen}\ and\ \citenamefont
  {Chuang}(2011)}]{Nielsen:2011aa}%
  \BibitemOpen
  \bibfield  {author} {\bibinfo {author} {\bibfnamefont {M.~A.}\ \bibnamefont
  {Nielsen}}\ and\ \bibinfo {author} {\bibfnamefont {I.~L.}\ \bibnamefont
  {Chuang}},\ }\href@noop {} {\emph {\bibinfo {title} {Quantum Computation and
  Quantum Information: 10th Anniversary Edition}}},\ \bibinfo {edition} {10th}\
  ed.\ (\bibinfo  {publisher} {Cambridge University Press},\ \bibinfo {address}
  {New York, NY, USA},\ \bibinfo {year} {2011})\BibitemShut {NoStop}%
\bibitem [{\citenamefont {Pirandola}\ \emph {et~al.}(2017)\citenamefont
  {Pirandola}, \citenamefont {Laurenza}, \citenamefont {Ottaviani},\ and\
  \citenamefont {Banchi}}]{PLOS}%
  \BibitemOpen
  \bibfield  {author} {\bibinfo {author} {\bibfnamefont {S.}~\bibnamefont
  {Pirandola}}, \bibinfo {author} {\bibfnamefont {R.}~\bibnamefont {Laurenza}},
  \bibinfo {author} {\bibfnamefont {C.}~\bibnamefont {Ottaviani}}, \ and\
  \bibinfo {author} {\bibfnamefont {L.}~\bibnamefont {Banchi}},\ }\bibfield
  {title} {\enquote {\bibinfo {title} {Fundamental limits of repeaterless
  quantum communications},}\ }\href {http://dx.doi.org/10.1038/ncomms15043}
  {\bibfield  {journal} {\bibinfo  {journal} {Nature Communications}\ }\textbf
  {\bibinfo {volume} {8}},\ \bibinfo {pages} {15043 EP } (\bibinfo {year}
  {2017})}\BibitemShut {NoStop}%
\bibitem [{\citenamefont {Shirokov}(2017)}]{Shirokov2017}%
  \BibitemOpen
  \bibfield  {author} {\bibinfo {author} {\bibfnamefont {M.}~\bibnamefont
  {Shirokov}},\ }\href@noop {} {\enquote {\bibinfo {title}
  {{E}nergy-constrained diamond norms and their use in quantum information
  theory},}\ } (\bibinfo {year} {2017}),\ \bibinfo {note}
  {arXiv:1706.00361v2},\ \Eprint {http://arxiv.org/abs/1706.00361} {1706.00361}
  \BibitemShut {NoStop}%
\bibitem [{\citenamefont {Winter}(2017)}]{Winter2017}%
  \BibitemOpen
  \bibfield  {author} {\bibinfo {author} {\bibfnamefont {A.}~\bibnamefont
  {Winter}},\ }\href@noop {} {\enquote {\bibinfo {title} {{E}nergy-constrained
  diamond norm with applications to the uniform continuity of continuous
  variable channel capacities},}\ } (\bibinfo {year} {2017}),\ \bibinfo {note}
  {arXiv:1712.10267},\ \Eprint {http://arxiv.org/abs/1712.10267} {1712.10267}
  \BibitemShut {NoStop}%
\bibitem [{\citenamefont {Horodecki}\ \emph {et~al.}(2003)\citenamefont
  {Horodecki}, \citenamefont {Shor},\ and\ \citenamefont
  {Ruskai}}]{Horodecki_Shor_Ruskai_2003}%
  \BibitemOpen
  \bibfield  {author} {\bibinfo {author} {\bibfnamefont {M.}~\bibnamefont
  {Horodecki}}, \bibinfo {author} {\bibfnamefont {P.~W.}\ \bibnamefont {Shor}},
  \ and\ \bibinfo {author} {\bibfnamefont {M.~B.}\ \bibnamefont {Ruskai}},\
  }\bibfield  {title} {\enquote {\bibinfo {title} {Entanglement breaking
  channels},}\ }\href {\doibase 10.1142/S0129055X03001709} {\bibfield
  {journal} {\bibinfo  {journal} {Reviews in Mathematical Physics}\ }\textbf
  {\bibinfo {volume} {15}},\ \bibinfo {pages} {629"1¤7641} (\bibinfo {year}
  {2003})}\BibitemShut {NoStop}%
\bibitem [{\citenamefont {Adesso}\ \emph {et~al.}(2014)\citenamefont {Adesso},
  \citenamefont {Ragy},\ and\ \citenamefont {Lee}}]{Adesso2014}%
  \BibitemOpen
  \bibfield  {author} {\bibinfo {author} {\bibfnamefont {G.}~\bibnamefont
  {Adesso}}, \bibinfo {author} {\bibfnamefont {S.}~\bibnamefont {Ragy}}, \ and\
  \bibinfo {author} {\bibfnamefont {A.~R.}\ \bibnamefont {Lee}},\ }\bibfield
  {title} {\enquote {\bibinfo {title} {Continuous variable quantum information:
  Gaussian states and beyond},}\ }\href {\doibase 10.1142/S1230161214400010}
  {\bibfield  {journal} {\bibinfo  {journal} {Open Syst. Inf. Dyn.}\ }\textbf
  {\bibinfo {volume} {21}},\ \bibinfo {pages} {1440001} (\bibinfo {year}
  {2014})}\BibitemShut {NoStop}%
\bibitem [{\citenamefont {Barnett}\ and\ \citenamefont
  {Radmore}(2002)}]{barnett2002methods}%
  \BibitemOpen
  \bibfield  {author} {\bibinfo {author} {\bibfnamefont {S.~M.}\ \bibnamefont
  {Barnett}}\ and\ \bibinfo {author} {\bibfnamefont {P.~M.}\ \bibnamefont
  {Radmore}},\ }\href@noop {} {\emph {\bibinfo {title} {Methods in Theoretical
  Quantum Optics}}},\ Vol.~\bibinfo {volume} {15}\ (\bibinfo  {publisher}
  {Oxford University Press},\ \bibinfo {year} {2002})\BibitemShut {NoStop}%
\bibitem [{\citenamefont {Watrous}(2004)}]{watrous2004notes}%
  \BibitemOpen
  \bibfield  {author} {\bibinfo {author} {\bibfnamefont {J.}~\bibnamefont
  {Watrous}},\ }\bibfield  {title} {\enquote {\bibinfo {title} {Notes on
  super-operator norms induced by schatten norms},}\ }\href@noop {} {\bibfield
  {journal} {\bibinfo  {journal} {arXiv preprint quant-ph/0411077}\ } (\bibinfo
  {year} {2004})}\BibitemShut {NoStop}%
\bibitem [{\citenamefont {Killoran}\ and\ \citenamefont
  {L{\"u}tkenhaus}(2011)}]{killoran2011strong}%
  \BibitemOpen
  \bibfield  {author} {\bibinfo {author} {\bibfnamefont {N.}~\bibnamefont
  {Killoran}}\ and\ \bibinfo {author} {\bibfnamefont {N.}~\bibnamefont
  {L{\"u}tkenhaus}},\ }\bibfield  {title} {\enquote {\bibinfo {title} {Strong
  quantitative benchmarking of quantum optical devices},}\ }\href@noop {}
  {\bibfield  {journal} {\bibinfo  {journal} {Physical Review A}\ }\textbf
  {\bibinfo {volume} {83}},\ \bibinfo {pages} {052320} (\bibinfo {year}
  {2011})}\BibitemShut {NoStop}%
\bibitem [{\citenamefont {Winter}(1999)}]{winter1999coding}%
  \BibitemOpen
  \bibfield  {author} {\bibinfo {author} {\bibfnamefont {A.}~\bibnamefont
  {Winter}},\ }\bibfield  {title} {\enquote {\bibinfo {title} {Coding theorem
  and strong converse for quantum channels},}\ }\href@noop {} {\bibfield
  {journal} {\bibinfo  {journal} {IEEE Transactions on Information Theory}\
  }\textbf {\bibinfo {volume} {45}},\ \bibinfo {pages} {2481} (\bibinfo {year}
  {1999})}\BibitemShut {NoStop}%
\bibitem [{Note1()}]{Note1}%
  \BibitemOpen
  \bibinfo {note} {Such a state may not exist, because the set of states
  satisfying $\protect \mathrm {Tr}(\rho \protect \mathaccentV
  {hat}05E{n}_A)\leq \protect \mathaccentV {bar}016n$ is not compact;
  nonetheless, we can consider a state $\rho =\rho _{AC}$ that is optimal for
  the energy-constrained diamond norm within $\epsilon $, for an arbitrary
  $\epsilon >0$.}\BibitemShut {Stop}%
\bibitem [{\citenamefont {Ogawa}\ and\ \citenamefont
  {Nagaoka}(2007)}]{ogawa2007making}%
  \BibitemOpen
  \bibfield  {author} {\bibinfo {author} {\bibfnamefont {T.}~\bibnamefont
  {Ogawa}}\ and\ \bibinfo {author} {\bibfnamefont {H.}~\bibnamefont
  {Nagaoka}},\ }\bibfield  {title} {\enquote {\bibinfo {title} {Making good
  codes for classical-quantum channel coding via quantum hypothesis testing},}\
  }\href@noop {} {\bibfield  {journal} {\bibinfo  {journal} {IEEE Transactions
  on Information Theory}\ }\textbf {\bibinfo {volume} {53}},\ \bibinfo {pages}
  {2261} (\bibinfo {year} {2007})}\BibitemShut {NoStop}%
\bibitem [{\citenamefont {Bhatia}(2013)}]{bhatia2013matrix}%
  \BibitemOpen
  \bibfield  {author} {\bibinfo {author} {\bibfnamefont {R.}~\bibnamefont
  {Bhatia}},\ }\href@noop {} {\emph {\bibinfo {title} {Matrix analysis}}},\
  Vol.\ \bibinfo {volume} {169}\ (\bibinfo  {publisher} {Springer Science \&
  Business Media},\ \bibinfo {year} {2013})\BibitemShut {NoStop}%
\bibitem [{\citenamefont {Horodecki}\ \emph {et~al.}(1996)\citenamefont
  {Horodecki}, \citenamefont {Horodecki},\ and\ \citenamefont
  {Horodecki}}]{horodecki1996quantum}%
  \BibitemOpen
  \bibfield  {author} {\bibinfo {author} {\bibfnamefont {R.}~\bibnamefont
  {Horodecki}}, \bibinfo {author} {\bibfnamefont {P.}~\bibnamefont
  {Horodecki}}, \ and\ \bibinfo {author} {\bibfnamefont {M.}~\bibnamefont
  {Horodecki}},\ }\bibfield  {title} {\enquote {\bibinfo {title} {Quantum
  $\alpha$-entropy inequalities: independent condition for local realism?}}\
  }\href@noop {} {\bibfield  {journal} {\bibinfo  {journal} {Physics Letters
  A}\ }\textbf {\bibinfo {volume} {210}},\ \bibinfo {pages} {377} (\bibinfo
  {year} {1996})}\BibitemShut {NoStop}%
\bibitem [{\citenamefont {Nielsen}\ and\ \citenamefont
  {Kempe}(2001)}]{nielsen2001separable}%
  \BibitemOpen
  \bibfield  {author} {\bibinfo {author} {\bibfnamefont {M.~A.}\ \bibnamefont
  {Nielsen}}\ and\ \bibinfo {author} {\bibfnamefont {J.}~\bibnamefont
  {Kempe}},\ }\bibfield  {title} {\enquote {\bibinfo {title} {Separable states
  are more disordered globally than locally},}\ }\href@noop {} {\bibfield
  {journal} {\bibinfo  {journal} {Physical Review Letters}\ }\textbf {\bibinfo
  {volume} {86}},\ \bibinfo {pages} {5184} (\bibinfo {year}
  {2001})}\BibitemShut {NoStop}%
\bibitem [{\citenamefont {Winter}(2016)}]{winter2016tight}%
  \BibitemOpen
  \bibfield  {author} {\bibinfo {author} {\bibfnamefont {A.}~\bibnamefont
  {Winter}},\ }\bibfield  {title} {\enquote {\bibinfo {title} {Tight uniform
  continuity bounds for quantum entropies: conditional entropy, relative
  entropy distance and energy constraints},}\ }\href@noop {} {\bibfield
  {journal} {\bibinfo  {journal} {Communications in Mathematical Physics}\
  }\textbf {\bibinfo {volume} {347}},\ \bibinfo {pages} {291} (\bibinfo {year}
  {2016})}\BibitemShut {NoStop}%
\bibitem [{\citenamefont {Qi}\ \emph {et~al.}(2016)\citenamefont {Qi},
  \citenamefont {Wilde},\ and\ \citenamefont {Guha}}]{qi2016thermal}%
  \BibitemOpen
  \bibfield  {author} {\bibinfo {author} {\bibfnamefont {H.}~\bibnamefont
  {Qi}}, \bibinfo {author} {\bibfnamefont {M.~M.}\ \bibnamefont {Wilde}}, \
  and\ \bibinfo {author} {\bibfnamefont {S.}~\bibnamefont {Guha}},\ }\bibfield
  {title} {\enquote {\bibinfo {title} {Thermal states minimize the output
  entropy of single-mode phase-insensitive gaussian channels with an input
  entropy constraint},}\ }\href@noop {} {\bibfield  {journal} {\bibinfo
  {journal} {arXiv preprint arXiv:1607.05262}\ } (\bibinfo {year}
  {2016})}\BibitemShut {NoStop}%
\end{thebibliography}%


\onecolumngrid
\appendix
\section{Proving theorem one: finite energy bound \label{AppA}}

\begin{definition} (Diamond norm for superoperators) For a Hermiticity-preserving superoperator $\Lambda:\cL(A)\rightarrow\cL(B)$, with $\cL(A)$ and $\cL(B)$ the spaces of linear operators on $A$ and $B$, respectively, we define \cite{watrous2004notes}
\beq\label{diamond_super}
\|\Lambda\|_{\Diamond} := \sup_{X_{AC}\neq 0} \{ \|(\Lambda_A \otimes \id_C)[X_{AC}]\|_1 / \| X_{AC} \|_1 \}.
\eeq
where the supremum is over all Hermitian operators $X_{AC}\neq 0$ that pertain to the input space of $\Lambda$ and to an arbitrary ancillary system $C$.
\end{definition}

\begin{definition} (Energy-constrained diamond norm) For a Hermiticity-preserving linear map $\Lambda $ on $A$ we define (see also \cite{PLOS,Shirokov2017,Winter2017})
\beq
\|\Lambda\|_{\Diamond\bar{n}}:=\sup_{\rho:\Tr(\rho \hat{n}_{A})\leq \bar n} \| \Lambda_{A}\otimes\id_C [\rho_{AC}]\|_1 .
\eeq
where the supremum is over density matrices $\rho $ pertaining to the input $A$ and an arbitrary ancilla $C$, and finite $\bar{n}>0$. Here $\hat{n}_{A}$ is the number operator only for subsystem $A$.
\end{definition}


\begin{remark}\label{remark1}
For any state $\rho_{AC}$ that satisfies $\Tr(\rho \hat{n}_{A})\leq \bar n$ there is a pure state $\psi=\ket{\psi}\bra{\psi}$ on an extended space ${ACC'}$  that also satisfies $\Tr(\psi \hat{n}_{A})\leq \bar n$ and such that $\|\Lambda_{A}\otimes \id_C [\rho]\|_1\leq \|\Lambda_{A}\otimes \id_{CC'} [\psi]\|_1$, so we can restrict our attention to pure states. Thanks to the Schmidt decomposition, we know that local support of $\psi$ on $CC'$ is isomorphic to $A$. Thus,  one can always imagine that, effectively, for the optimal choice of input, $CC'\simeq A'$, with $A'$ isomorphic to $A$. 
\end{remark}

\begin{theorem}
$\|\Lambda\|_{\Diamond\bar{n}}$ is a norm for any $\bar{n}>0$.
\end{theorem}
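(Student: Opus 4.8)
The plan is to verify the norm axioms on the \emph{real} vector space of trace-norm-continuous (bounded) Hermiticity-preserving maps $\Lambda$ — the natural domain here, which in particular contains all differences of cptp maps relevant to Theorem~\ref{th:main1}. Three of the axioms are immediate. Non-negativity holds because $\|\cdot\|_1\geq 0$ and a supremum of non-negative numbers is non-negative. Absolute homogeneity holds for real scalars $c$ (the only ones preserving Hermiticity): since $(c\Lambda)_A\otimes\id_C[\rho]=c\,\Lambda_A\otimes\id_C[\rho]$ and $\|cX\|_1=|c|\,\|X\|_1$, pulling $|c|$ out of the supremum gives $\|c\Lambda\|_{\Diamond\bar n}=|c|\,\|\Lambda\|_{\Diamond\bar n}$. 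The triangle inequality follows from subadditivity of $\|\cdot\|_1$ applied pointwise in $\rho$, together with $\sup_\rho\big(f(\rho)+g(\rho)\big)\leq\sup_\rho f(\rho)+\sup_\rho g(\rho)$, where $f(\rho)=\|\Lambda_{1,A}\otimes\id_C[\rho]\|_1$ and $g(\rho)=\|\Lambda_{2,A}\otimes\id_C[\rho]\|_1$ range over the common constraint set.

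The only substantive axiom is definiteness: I must show $\|\Lambda\|_{\Diamond\bar n}=0\Rightarrow\Lambda=0$, the converse being trivial. Since taking the ancilla $C$ trivial is always permitted in the supremum, vanishing of the norm forces $\|\Lambda[\rho]\|_1=0$, hence $\Lambda[\rho]=0$, for every density matrix $\rho$ on $A$ with $\Tr(\rho\hat{n}_{A})\leq\bar n$. By linearity $\Lambda$ then vanishes on the entire real-linear span $\mathcal{V}_{\bar n}$ of such states, so it suffices to prove that $\mathcal{V}_{\bar n}$ is trace-norm dense in the self-adjoint trace-class operators and to invoke continuity of $\Lambda$.

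To establish density I would explicitly realize inside $\mathcal{V}_{\bar n}$ every self-adjoint ``matrix unit'' $\proj{n}$, $\ket m\bra n+\ket n\bra m$, and $i(\ket m\bra n-\ket n\bra m)$. The essential point is that $\bar n>0$ lets us place arbitrarily small weights on any finite collection of excited Fock levels while parking the remaining weight on the zero-energy vacuum $\ket 0$, so $\Tr(\rho\hat{n}_{A})\leq\bar n$ can always be met. The diagonal terms arise from $\rho_\epsilon=(1-\epsilon)\proj{0}+\epsilon\proj{n}$ (energy $\epsilon n\leq\bar n$ for $\epsilon\leq\bar n/n$), whose difference with $\proj{0}\in\mathcal{V}_{\bar n}$ yields $\proj{n}$; the off-diagonal terms arise from pure states such as $\sqrt{1-\epsilon_1-\epsilon_2}\ket 0+\sqrt{\epsilon_1}\ket m+\sqrt{\epsilon_2}\,e^{i\phi}\ket n$, from which, after subtracting the already-constructed diagonal and vacuum-cross contributions and choosing $\phi\in\{0,\pi/2\}$, one extracts both Hermitian cross combinations. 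This produces all finite-rank self-adjoint operators, which are trace-norm dense in the self-adjoint trace class.

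Finally, for a self-adjoint trace-class $X$ with spectral truncations $X_k\to X$ in trace norm, continuity of $\Lambda$ gives $\Lambda[X]=\lim_k\Lambda[X_k]=0$, whence $\Lambda=0$. I expect the spanning/density step to be the main obstacle: one must check that the explicit superpositions genuinely respect the energy cap for \emph{every} pair $(m,n)$ and that isolating each cross term by subtraction is legitimate. A closely related subtlety is that the conclusion genuinely requires $\Lambda$ to be bounded — vanishing on the dense span $\mathcal{V}_{\bar n}$ propagates to all of trace class only for continuous maps, so on the full space of merely finite-norm maps the quantity would be a seminorm rather than a norm.
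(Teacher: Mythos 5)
Your proof is correct, but it takes a genuinely different route from the paper's. The paper proves definiteness by contraposition with a single mixing trick: if $\Lambda\neq 0$, pick a pure state with $\Lambda[\proj{\psi}]\neq 0$; either the vacuum already witnesses $\|\Lambda\|_{\Diamond\bar{n}}>0$, or one feeds in the convex combination $(1-p)\proj{0}+p\proj{\psi}$, which meets the energy constraint for small $p>0$ and on which $\Lambda$ evaluates to $p\,\Lambda[\proj{\psi}]\neq 0$. You instead argue directly: vanishing of the norm forces $\Lambda$ to vanish on all energy-constrained states; you then show by explicit construction (vacuum-dominated mixtures for the diagonal matrix units, low-weight superpositions for the off-diagonal ones) that such states span a trace-norm dense subspace of the self-adjoint trace class, and you conclude by continuity. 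What the paper's route buys is brevity. What your route buys is rigor at exactly the two points the paper glosses over: (i) the paper's witness $\ket{\psi}$ may have infinite mean energy, in which case no $p>0$ satisfies the constraint, whereas your witnesses are Fock states, which have finite energy by construction; (ii) the paper's very first step --- that a nonzero map must be nonzero on some \emph{pure} state --- itself needs trace-norm continuity (the real span of pure states is only the finite-rank self-adjoint operators, not all of trace class), an assumption the paper never states but which you correctly flag as indispensable, noting that without it the quantity is only a seminorm. One small wrinkle in your write-up: the final step via spectral truncations $X_k\to X$ is redundant and, taken literally, circular --- the eigenvectors of $X_k$ need not have finite Fock support, so the claim $\Lambda[X_k]=0$ is not immediate; the clean conclusion is simply that a continuous linear map vanishing on a dense subspace vanishes identically, which your density construction already delivers.
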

\begin{proof}
The property $\|\alpha \Lambda\|_{\Diamond\bar{n}}= |\alpha| \|\Lambda\|_{\Diamond\bar{n}}$ is directly inherited from the 1-norm; the same holds for the triangle inequality. To prove that $\|\Lambda\|_{\Diamond\bar{n}}>0$ for any non-vanishing $\Lambda$, as long as $\bar{n}>0$, it is enough to consider that, if $\Lambda\geq 0$, then there is at least one state $\proj{\psi}$ such that $\Lambda[\proj{\psi}]\neq 0$; either such a $\ket{\psi}=\ket{0}$ (and we are done), or we can consider the convex combination $(1-p)\proj{0}+p\proj{\psi}$ for small enough $p>0$.
\end{proof}

\begin{lemma}(Fock cutoff lemma from \cite{killoran2011strong}.)
For any normalized state $\rho$ it holds
\[
\Tr(\Pi_{d} \rho) \geq 1 -  \frac{\langle \hat{n} \rangle_\rho - \langle \hat{n} \rangle_{\rho_{{T}}}}{d}
\]
where $\Pi_{d} =\sum_{i=0}^{d-1} \proj{i}$, $\rho_T = \Pi_{d} \rho \Pi_{d}$ and we adopt the slight abuse of notation $ \langle \hat{n} \rangle_{\rho_{{T}}}=\Tr(\hat n\rho_T)$.
\end{lemma}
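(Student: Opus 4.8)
The plan is to reduce the statement to a purely classical inequality involving the diagonal Fock populations, exploiting that both $\Pi_d$ and $\hat n$ are diagonal in the Fock basis. First I would set $p_n := \bra{n}\rho\ket{n}$ for the populations, so that normalization gives $\sum_{n=0}^{\infty} p_n = 1$ and $\Tr(\Pi_d \rho) = \sum_{n=0}^{d-1} p_n = 1 - \sum_{n\geq d} p_n$. This rewrites the quantity on the left-hand side of the claim entirely in terms of the tail probability $\sum_{n\geq d} p_n$.

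Next I would compute the energy difference appearing on the right-hand side. Since $\rho_T = \Pi_d \rho \Pi_d$ has matrix elements $\bra{m}\rho_T\ket{n} = \bra{m}\rho\ket{n}$ for $m,n < d$ and vanishes otherwise, and since $\hat n$ is diagonal, only the low-lying populations survive in the trace: $\langle \hat n\rangle_{\rho_T} = \Tr(\hat n \rho_T) = \sum_{n=0}^{d-1} n\, p_n$. Subtracting this from $\langle \hat n\rangle_\rho = \sum_{n=0}^{\infty} n\, p_n$ yields the clean tail expression $\langle \hat n\rangle_\rho - \langle \hat n\rangle_{\rho_T} = \sum_{n\geq d} n\, p_n$.

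Substituting these two identities into the claimed bound, the constant terms cancel and the inequality becomes equivalent to $\sum_{n\geq d} p_n \leq \tfrac{1}{d}\sum_{n\geq d} n\, p_n$, i.e.\ $\sum_{n\geq d}\bigl(\tfrac{n}{d}-1\bigr) p_n \geq 0$. This holds term by term, since $n/d \geq 1$ for every $n \geq d$ while each population $p_n \geq 0$, which concludes the argument.

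The computation is essentially elementary, so there is no serious obstacle; the only point requiring care is the second step, namely recognizing that the off-diagonal coherences of $\rho$ never contribute to $\Tr(\hat n \rho_T)$ because $\hat n$ is diagonal, so that the energy removed by the truncation is exactly the tail sum $\sum_{n\geq d} n\, p_n$ rather than something depending on the full operator $\rho$. Once this observation is in place, the result follows from a single non-negativity estimate.
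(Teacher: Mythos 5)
Your proof is correct. Note that the paper does not actually prove this lemma itself; it is imported verbatim from the cited reference \cite{killoran2011strong}, so there is no in-paper argument to compare against. Your derivation is the natural self-contained one: reducing everything to the diagonal Fock populations $p_n = \bra{n}\rho\ket{n}$, observing that the truncation removes exactly the tail energy $\sum_{n\geq d} n\, p_n$ (the coherences drop out of both $\Tr(\Pi_d\rho)$ and $\Tr(\hat n\rho_T)$ since $\Pi_d$ and $\hat n$ are diagonal), and then applying the Markov-type estimate $\sum_{n\geq d} p_n \leq \tfrac{1}{d}\sum_{n\geq d} n\, p_n$. This is precisely the standard argument behind such Fock-truncation bounds, and it also covers the degenerate case $\langle\hat n\rangle_\rho = \infty$, where the claimed inequality is vacuous; your write-up fills the gap the paper leaves to the external citation.
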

\begin{lemma}(Gentle measurement lemma, taken from \cite{winter1999coding})
Consider a density operator $\rho$ and an effect (a POVM element) $0\leq M\leq \openone$. Suppose
$\Tr(M\rho)\geq 1-\epsilon$; then
\[
\frac{1}{2}\|\rho-\rho'\|_1 \leq \sqrt{\epsilon},
\]
where $\rho'$ is the post-selected state $\rho' = \sqrt{M}\rho\sqrt{M}/\Tr(M\rho)$.
\end{lemma}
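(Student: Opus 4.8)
The plan is to prove the bound first for pure states, where the fidelity between $\rho$ and $\rho'$ admits a clean lower bound, and then to lift it to arbitrary density operators by purification together with monotonicity of the trace distance. Throughout I would work with the (non-squared) fidelity $F(\sigma,\tau)=\|\sqrt{\sigma}\sqrt{\tau}\|_1$ and invoke the Fuchs--van de Graaf inequality $\tfrac12\|\sigma-\tau\|_1\le\sqrt{1-F(\sigma,\tau)^2}$, which holds for density operators on any separable Hilbert space and hence covers the infinite-dimensional setting of interest here.

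For the pure-state case, take $\rho=\proj{\psi}$ and set $q=\Tr(M\rho)=\bra{\psi}M\ket{\psi}\ge 1-\epsilon$, so that $\rho'=\proj{\psi'}$ with $\ket{\psi'}=\sqrt{M}\ket{\psi}/\sqrt{q}$. The fidelity is then just the overlap $F=|\langle\psi|\psi'\rangle|=\bra{\psi}\sqrt{M}\ket{\psi}/\sqrt{q}$. The key elementary step is the operator inequality $\sqrt{M}\ge M$, valid because $0\le M\le\eins$ forces $\sqrt{x}\ge x$ on the spectrum of $M$; this gives $\bra{\psi}\sqrt{M}\ket{\psi}\ge\bra{\psi}M\ket{\psi}=q$, hence $F\ge q/\sqrt{q}=\sqrt{q}\ge\sqrt{1-\epsilon}$. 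Inserting $F^2\ge 1-\epsilon$ into Fuchs--van de Graaf yields $\tfrac12\|\rho-\rho'\|_1\le\sqrt{1-F^2}\le\sqrt{\epsilon}$, which is precisely the claim.

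To handle a general $\rho$, I would purify it to a unit vector $\ket{\Psi}$ on an enlarged space $AR$, apply $\sqrt{M}\otimes\eins_R$, and observe that the outcome weight is unchanged, $\Tr[(M\otimes\eins_R)\proj{\Psi}]=\Tr(M\rho)\ge 1-\epsilon$, while the reduced state of the normalized post-measurement vector $\ket{\Psi'}$ is exactly $\rho'=\sqrt{M}\rho\sqrt{M}/q$. The pure-state bound gives $\tfrac12\|\proj{\Psi}-\proj{\Psi'}\|_1\le\sqrt{\epsilon}$, and monotonicity of the trace distance under the partial trace (a CPTP map) then delivers $\tfrac12\|\rho-\rho'\|_1\le\tfrac12\|\proj{\Psi}-\proj{\Psi'}\|_1\le\sqrt{\epsilon}$. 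I expect the main obstacle to be securing the \emph{tight} constant: the more direct operator manipulation, writing $\rho-\sqrt{M}\rho\sqrt{M}=(\eins-\sqrt{M})\rho+\sqrt{M}\rho(\eins-\sqrt{M})$ and bounding each summand by Hölder's inequality together with $(\eins-\sqrt{M})^2\le\eins-M$, controls only the \emph{unnormalized} difference by $2\sqrt{\epsilon}$ and acquires a spurious $O(\epsilon)$ term upon renormalization; it is the fidelity route that removes this excess and gives the clean $\tfrac12\|\rho-\rho'\|_1\le\sqrt{\epsilon}$. The only subtlety in infinite dimensions is that purification and monotonicity of the trace distance remain valid on separable Hilbert spaces, which indeed they do.
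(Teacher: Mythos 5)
Your proof is correct. There is nothing in the paper to compare it against step by step: the lemma is imported verbatim with a citation to Winter's 1999 paper (and is used again, in Appendix~\ref{AppB}, in the unnormalized form $\|\rho-\sqrt{X}\rho\sqrt{X}\|_1\le 2\sqrt{\Tr(\rho)-\Tr(\rho X)}$ credited to Winter and Ogawa--Nagaoka), so the relevant comparison is with the classic argument rather than with anything proved in the paper. That classic argument is precisely the direct operator manipulation you flag at the end of your proposal: splitting $\rho-\sqrt{M}\rho\sqrt{M}$ into $(\openone-\sqrt{M})\rho+\sqrt{M}\rho(\openone-\sqrt{M})$, applying H\"older/Cauchy--Schwarz and $(\openone-\sqrt{M})^2\le\openone-M$; it controls the \emph{unnormalized} difference with constant $2$ (originally $\sqrt{8\epsilon}$ in Winter's paper) and, as you correctly observe, acquires an extra $O(\epsilon)$ term upon renormalization. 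Your route --- computing the overlap exactly in the pure case using $\sqrt{M}\ge M$ on the spectrum of $M$, then lifting to mixed states via purification and monotonicity of the trace distance under the partial trace --- is what delivers the clean normalized bound $\tfrac12\|\rho-\rho'\|_1\le\sqrt{\epsilon}$ stated in the lemma, and every ingredient you invoke (purification, Fuchs--van de Graaf, CPTP monotonicity of the trace norm) is indeed valid on separable Hilbert spaces, so the infinite-dimensional setting of the paper is covered. Two minor remarks: for pure states you do not need Fuchs--van de Graaf as an inequality, since $\tfrac12\|\proj{\psi}-\proj{\psi'}\|_1=\sqrt{1-|\langle\psi|\psi'\rangle|^2}$ holds with equality; and the hypothesis $\Tr(M\rho)\ge 1-\epsilon>0$ is what guarantees $\rho'$ is well defined, the bound being vacuous when $\epsilon\ge 1$.
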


Putting the two lemmas together, we have the following
\begin{proposition}
	\label{prop:focktruncation}
Let $\rho=\rho_{AC}$ be a bipartite state, and define the truncated states $\rho_{{T}} := (\Pi_{d}\otimes\openone)\rho (\Pi_{d}\otimes\openone)$, $\rho_{{T\cN}}:=(\Pi_{d}\otimes\openone)\rho (\Pi_{d}\otimes\openone) / \Tr(\Pi_{d} \rho_{A})$. Then
\[
\frac{1}{2}\|\rho-\rho_{{T\cN}}\|_1 \leq \sqrt{\frac{\langle \hat{n}_{A} \rangle_\rho - \langle \hat{n}_{A} \rangle_{\rho_{_{T}}}}{d} }.
\]
\end{proposition}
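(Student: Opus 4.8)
The plan is to obtain the statement by directly composing the two preceding lemmas, using the Fock projector $\Pi_d$ to build the gentle-measurement effect. First I would take the effect to be $M=\Pi_d\otimes\openone$ on the joint system $AC$. Since $\Pi_d$ is an orthogonal projection, so is $M$, and in particular $0\leq M\leq\openone$, so the Gentle measurement lemma applies. Because $M$ is a projection we have $\sqrt{M}=M$, and hence the post-selected state appearing in that lemma is exactly $\rho'=M\rho M/\Tr(M\rho)=(\Pi_d\otimes\openone)\rho(\Pi_d\otimes\openone)/\Tr(\Pi_d\rho_A)=\rho_{T\cN}$, where I used $\Tr(M\rho)=\Tr[(\Pi_d\otimes\openone)\rho]=\Tr(\Pi_d\rho_A)$ with $\rho_A=\Tr_C\rho$. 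This identifies the object whose distance from $\rho$ the lemma controls with the truncated-and-renormalized state $\rho_{T\cN}$ in the claim.

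Next I would supply the lower bound $\Tr(M\rho)\geq 1-\epsilon$ required by the Gentle measurement lemma, reading off $\epsilon$ from the Fock cutoff lemma. The key observation is that $\Tr(M\rho)=\Tr(\Pi_d\rho_A)$ depends only on the reduced state $\rho_A$, so I would apply the Fock cutoff lemma directly to $\rho_A$, obtaining $\Tr(\Pi_d\rho_A)\geq 1-(\langle\hat n\rangle_{\rho_A}-\langle\hat n\rangle_{\Pi_d\rho_A\Pi_d})/d$. It then remains to match these single-mode expectation values with the bipartite quantities in the statement: since $\hat n_A$ acts trivially on $C$ one has $\langle\hat n_A\rangle_\rho=\langle\hat n\rangle_{\rho_A}$, and since tracing out $C$ commutes with the truncation $\Pi_d\otimes\openone$ on $A$ (and $\Pi_d$ commutes with $\hat n$) one also has $\langle\hat n_A\rangle_{\rho_T}=\Tr[(\hat n_A\otimes\openone)(\Pi_d\otimes\openone)\rho(\Pi_d\otimes\openone)]=\langle\hat n\rangle_{\Pi_d\rho_A\Pi_d}$. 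Hence the bound reads $\Tr(M\rho)\geq 1-\epsilon$ with $\epsilon=(\langle\hat n_A\rangle_\rho-\langle\hat n_A\rangle_{\rho_T})/d$.

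Finally, feeding this value of $\epsilon$ together with the identification $\rho'=\rho_{T\cN}$ into the conclusion of the Gentle measurement lemma yields $\tfrac{1}{2}\|\rho-\rho_{T\cN}\|_1\leq\sqrt{\epsilon}=\sqrt{(\langle\hat n_A\rangle_\rho-\langle\hat n_A\rangle_{\rho_T})/d}$, which is precisely the asserted inequality. I do not expect any genuine obstacle: the argument is a clean composition, and the only point requiring care is the bookkeeping of the second step, namely verifying that the number-operator expectations evaluated on the full bipartite state coincide with those on the reduced state and that the truncation on $A$ and the partial trace over $C$ commute. These facts follow at once from $\hat n_A=\hat n_A\otimes\openone_C$ and the locality of the partial trace, but they are exactly what licenses applying the single-mode Fock cutoff lemma verbatim in the bipartite setting.
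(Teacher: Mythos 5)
Your proposal is correct and matches the paper's own (implicit) argument exactly: the paper proves this proposition simply by ``putting the two lemmas together,'' i.e., applying the Gentle measurement lemma with effect $M=\Pi_d\otimes\openone$ and supplying the required bound $\Tr(M\rho)\geq 1-\epsilon$ via the Fock cutoff lemma on the reduced state $\rho_A$. Your careful bookkeeping---identifying $\rho'=\rho_{T\cN}$, noting $\Tr(M\rho)=\Tr(\Pi_d\rho_A)$, and matching $\langle\hat n_A\rangle_{\rho_T}$ with $\langle\hat n\rangle_{\Pi_d\rho_A\Pi_d}$---is precisely the detail the paper leaves to the reader.
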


\begin{definition}
	\label{def:truncchoi}
	Let $\Phi_{AA'} = \ket{\Phi}\bra{\Phi}$,  $\ket{\Phi} = (d)^{-1/2} \sum_{k =0}^{d-1} \ket{k,k} $ be a $d$-dimensional maximally entangled state up to local Fock state $\ket{d-1}$. For any cptp map $\Lambda : \mathcal{D}(A) \rightarrow \mathcal{D}(B)$ we define the truncated Choi-Jamio{\l}kowski operator of $\Lambda$ as
	\begin{equation}
	\label{eq:truncatedCJ}
	J_T  :=  \id_{A}\otimes \Lambda_{A'} (\Phi_{AA'}).
	\end{equation}
\end{definition}

\begin{remark}
Notice that the choice of labels as well as of ordering of the systems is unimportant, as long as used appropriately and consistently in any given circumstance. In particular, notice that the two systems $A$ and $A'$ in the definition \ref{def:truncchoi} of the truncated Choi-Jamio{\l}koski operator are isomorphic, so that $\Lambda: \mathcal{D}(A) \rightarrow \mathcal{D}(B)$ can be applied to $A'$ so to have that its truncated Choi-Jamio{\l}koski operator pertains to $AB$, that is, $J_T=J_T^{AB}$. Notice also that the state $\ket{\Phi}$ is $AA'$-symmetric.
\end{remark}

\begin{lemma}
\label{lem:trunc_lem_2}
(Truncated version of Lemma 2 in the Supplemental Material of \cite{Brandao2015QD}.)
For two cptp maps $\Lambda_0$ and $\Lambda_1$ it holds that
\beq
\|\Lambda_0 - \Lambda_1 \|_{\Diamond\bar{n}} \leq d \| J_T(\Lambda_0) - J_T(\Lambda_1) \|_1 + 2 \|\Lambda_0 - \Lambda_1\|_\Diamond \sqrt{\frac{\langle \hat{n}_{A} \rangle_\rho - \langle \hat{n}_{A} \rangle_{\rho_{_{T}}}}{d} }. 
\eeq
\end{lemma}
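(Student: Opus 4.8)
The plan is to reduce the energy-constrained diamond norm to an optimization over pure inputs, split each such input into a Fock-truncated part and a small remainder, and then bound the two pieces separately --- the truncated part by the truncated Choi operator $J_T$, and the remainder by the ordinary diamond norm. By Remark~\ref{remark1} it suffices to evaluate $\|(\Lambda_A\otimes\id)[\psi]\|_1$ on pure states $\psi=\proj{\psi}$ of $AA'$ satisfying $\Tr(\psi\hat n_A)\le\bar n$, with $A'$ isomorphic to $A$. Writing $\Lambda:=\Lambda_0-\Lambda_1$ and inserting the normalized truncation $\psi_{{T\cN}}$ from Proposition~\ref{prop:focktruncation}, the triangle inequality gives
\[
\|(\Lambda_A\otimes\id)[\psi]\|_1\le\|(\Lambda_A\otimes\id)[\psi_{{T\cN}}]\|_1+\|(\Lambda_A\otimes\id)[\psi-\psi_{{T\cN}}]\|_1 .
\]

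For the remainder I would simply invoke the definition of the diamond norm for superoperators: since $\psi-\psi_{{T\cN}}$ is Hermitian, $\|(\Lambda_A\otimes\id)[\psi-\psi_{{T\cN}}]\|_1\le\|\Lambda\|_{\Diamond}\,\|\psi-\psi_{{T\cN}}\|_1$, and Proposition~\ref{prop:focktruncation} bounds the trace distance $\|\psi-\psi_{{T\cN}}\|_1$ by $2\sqrt{(\langle\hat n_A\rangle_\psi-\langle\hat n_A\rangle_{\psi_{T}})/d}$. This reproduces the second term of the claimed inequality, with $\|\Lambda_0-\Lambda_1\|_{\Diamond}$ as the prefactor.

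The crux is the truncated term, which must generate the dimensional prefactor $d$. The state $\psi_{{T\cN}}$ is again pure and, by construction, its $A$-support lies in the $d$-dimensional range of $\Pi_{d}$; by the Schmidt decomposition its $A'$-support is likewise at most $d$-dimensional, so both factors may be taken isomorphic to the space underlying $\ket{\Phi}$ of Definition~\ref{def:truncchoi}. I would then represent $\ket{\psi_{{T\cN}}}=(\id_A\otimes R_{A'})\ket{\Phi}$ for a suitable operator $R$ on $A'$; matching Fock-basis coefficients shows $R=\sqrt d\,C^{\mathsf T}$, where $C$ is the $d\times d$ coefficient matrix of $\ket{\psi_{{T\cN}}}$. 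Normalization forces the Hilbert--Schmidt norm of $C$ to equal $1$, whence $\|R\|_\infty\le\sqrt d$. Because $R$ acts on $A'$ while $\Lambda$ acts on $A$, they commute, and using the $AA'$-symmetry of $\ket{\Phi}$ noted after Definition~\ref{def:truncchoi},
\[
(\Lambda_A\otimes\id)[\psi_{{T\cN}}]=(\id_A\otimes R)\,J_T(\Lambda)\,(\id_A\otimes R^\dagger),
\]
so that by H\"older's inequality $\|(\Lambda_A\otimes\id)[\psi_{{T\cN}}]\|_1\le\|R\|_\infty^2\,\|J_T(\Lambda)\|_1\le d\,\|J_T(\Lambda_0)-J_T(\Lambda_1)\|_1$, the last step using linearity of the Choi map. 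Combining the two bounds and taking the supremum over admissible $\psi$ --- the first term being $\psi$-independent --- yields the stated inequality.

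The main obstacle I anticipate is exactly this key step: confirming that the truncated pure state admits the representation through $\ket{\Phi}$ on a genuinely $d$-dimensional pair of factors, and tracking the normalization carefully enough to secure the clean bound $\|R\|_\infty^2\le d$ rather than a looser constant. Everything else reduces to the triangle inequality and the already-established Proposition~\ref{prop:focktruncation}. A minor point worth stating cleanly is that, after taking the supremum on the left, the energy difference $\langle\hat n_A\rangle_\rho-\langle\hat n_A\rangle_{\rho_{T}}$ appearing on the right is the one attached to the optimal input $\rho$ (and in any case is bounded by $\bar n$), consistent with the form in which the lemma is written.
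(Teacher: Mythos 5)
Your proof is correct, and its skeleton is the same as the paper's: split the (near-)optimal input by the triangle inequality into a Fock-truncated part and a remainder, bound the remainder by $\|\Lambda_0-\Lambda_1\|_{\Diamond}\,\|\psi-\psi_{T\cN}\|_1$ together with Proposition~\ref{prop:focktruncation}, and extract $d\,\|J_T(\Lambda_0)-J_T(\Lambda_1)\|_1$ from the truncated part. The genuine difference is in how that last term is obtained. The paper truncates the (near-)optimal \emph{mixed} state $\rho_{AC}$, purifies $\rho_{T\cN}$ on an auxiliary system $A''$, notes that purification can only improve distinguishability, and then invokes Lemma~2 of \cite{Brandao2015QD} as a black box on the resulting $d$-dimensional problem. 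You instead restrict to pure inputs from the outset via Remark~\ref{remark1}, use the fact that projection preserves purity (so $\psi_{T\cN}$ is again pure, and no auxiliary purifying system or extra monotonicity step is needed), and re-derive the finite-dimensional Choi bound explicitly: $\ket{\psi_{T\cN}}=(\id_A\otimes R_{A'})\ket{\Phi}$ with $R=\sqrt{d}\,C^{T}$, $\|R\|_\infty\le\|R\|_2=\sqrt{d}$, followed by H\"older and the $AA'$-symmetry of $\ket{\Phi}$ from Definition~\ref{def:truncchoi}. This is precisely the filtering technique the paper itself uses later to prove Lemma~\ref{exp_cut_lem2}, so your argument is fully consistent with its toolbox; what it buys is self-containedness and a slightly leaner route. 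Two points worth stating cleanly if you write it up: (i) the Schmidt-decomposition detour is not strictly needed --- $R$ may map the $d$-dimensional support of $\Pi_d$ into the full (infinite-dimensional) $A'$, since $\|R\|_\infty\le\|R\|_2$ and H\"older are insensitive to this, or equivalently one rotates the $A'$-support into the first $d$ Fock levels by a local unitary, which leaves all trace norms invariant; and (ii) since the second term depends on the input, the final inequality must be read, exactly as in the paper, with $\rho$ the (near-)optimal input --- your closing remark handles this, and downstream only the uniform bound $\langle\hat n_A\rangle_\rho-\langle\hat n_A\rangle_{\rho_T}\le\bar n$ is used anyway.
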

\begin{remark}
It should be clear that the truncated Choi-Jamiolkowski state \eqref{eq:truncatedCJ} is not in general in one-to-one correspondence with maps, the reason being that two maps may differ in their behaviour only on high-Fock-number states, so that it might be $J_T(\Lambda_0) = J_T(\Lambda_1)$ even if---in the extreme, but still possible case---the two maps are perfectly distinguishable. Notice that, on the contrary, $\|\cdot\|_{\Diamond\bar{n}}$ is a norm, so the left-hand side is non-zero as soon as $\Lambda_0\neq \Lambda_1$.
\end{remark}

\begin{proof}
Let $\rho=\rho_{AC}$ be the state optimal for the energy-constrained diamond norm \footnote{Such a state may not exist, because the set of states satisfying $\Tr(\rho \hat{n}_A)\leq \bar n$ is not compact; nonetheless, we can consider a state $\rho=\rho_{AC}$ that is optimal for the energy-constrained diamond norm within $\epsilon$, for an arbitrary $\epsilon>0$.}
\beq\label{eq:opt_cutdiamond}
\|\Lambda_0 - \Lambda_1 \|_{\Diamond\bar{n}} = \|(\Lambda_0 - \Lambda_1)_{A}\otimes \id_C [\rho]\|_1.
\eeq
We find
\beq
\begin{aligned}
\|(\Lambda_0 - \Lambda_1)_{A}\otimes\id_C [\rho]\|_1 - \|(\Lambda_0 - \Lambda_1)_{A}\otimes\id_C [\rho_{_{T\cN}}]\|_1
&\leq \|(\Lambda_0 - \Lambda_1)\otimes\id_C [\rho - \rho_{_{T\cN}}] \|_1 \\
&\leq \|\Lambda_0 - \Lambda_1\|_\Diamond \|\rho - \rho_{_{T\cN}}\|_1.
\end{aligned}
\label{eq:bdd_TJC}
\eeq
The first inequality follows from the reverse triangle inequality, and the second inequality from the definition of the diamond norm for superoperators in Eq.~(\ref{diamond_super}). We thus have
\beq
\begin{aligned}
\| \Lambda_0 - \Lambda_1\|_{\Diamond\bar{n}}
\leq
\|(\Lambda_0 - \Lambda_1)_{A}\otimes \id_{A'} [\rho_{_{T\cN}}]\|_1 + 2 \|\Lambda_0 - \Lambda_1\|_\Diamond \sqrt{\frac{\langle \hat{n}_{A} \rangle_\rho - \langle \hat{n}_{A} \rangle_{\rho_{_{T}}}}{d} },
\end{aligned}
\eeq
having used Proposition \ref{prop:focktruncation}. 
Notice that $\rho_{_{T\cN}}$ has a $d$-dimensional local support on $A$, and a purification $\ket{\psi}_{AA'A''}$ of it on $A''$ (with $A'A''$ taken to be one party, that is then effectively $d$-dimensional) used as input can only provide better distinguishability. For such a pure state we can use Lemma 2 from \cite{Brandao2015QD}, giving
\beq
\begin{aligned}
\| \Lambda_0 - \Lambda_1\|_{\Diamond\bar{n}}
\leq
d \| J_T(\Lambda_0) - J_T(\Lambda_1) \|_1 + 2 \|\Lambda_0 - \Lambda_1\|_\Diamond \sqrt{\frac{\langle \hat{n}_{A} \rangle_\rho - \langle \hat{n}_{A} \rangle_{\rho_{_{T}}}}{d} }.
\end{aligned}
\eeq

\end{proof}

\begin{remark}
We have
\beq
2 \|\Lambda_0 - \Lambda_1\|_\Diamond \sqrt{\frac{\langle \hat{n}_{A} \rangle_\rho - \langle \hat{n}_{A} \rangle_{\rho_{_{T}}}}{d} } \leq 4\sqrt{\bar{n} \over d},
\eeq
which holds because $\|\Lambda_0 - \Lambda_1\|_\Diamond \leq 2$, $\langle \hat{n}_{A} \rangle_\rho \leq \bar{n}$ and $\langle \hat{n}_{A} \rangle_{\rho_{_{T}}} \geq 0$.
\end{remark}

\begin{remark}
Lemma \ref{lem:trunc_lem_2} reduces to Lemma 2 in \cite{Brandao2015QD} for finite dimensions because, by taking the truncation to be the whole space, we have $\langle \hat{n}_{A} \rangle_\rho = \langle \hat{n}_{A} \rangle_{\rho_{_{T}}}$ and $J_T=J$.
\end{remark}

\begin{theorem}\label{th:main1_apx} Let $\Lambda : \mathcal{D}(A) \rightarrow \mathcal{D}(B)$, with $B=B_1B_2\ldots B_N$, be a cptp map. Define $\Lambda_j := \Tr_{ \backslash B_j} \circ \Lambda$ as the effective dynamics from $\mathcal{D}(A)$ to $\mathcal{D}(B_j)$ and fix a number $1 > \delta > 0$. Then there exists a POVM $\{M_l\}$ and a set $S \subseteq \{1,...,N\}$ with $|S| \geq (1-\delta)N$ such that, for all $j \in S$ (and for some finite energy $\bar{n}$, truncation $d$, and $k \leq N$), we have
\beq
\|\Lambda_j - \cE_j  \|_{\Diamond\bar{n}} \leq {1 \over \delta} \zeta
\eeq
where
\beq
\zeta = \sqrt{2 \ln(2) d^6 \log d \over k}+4\sqrt{\bar{n} \over d} + {2k \over N}
\eeq
and where the measure-and-prepare channel $\cE_j$ is given by
\beq
\cE_j(X) := \sum_l \Tr (M_l X) \sigma_{j,l},
\eeq
for states $\sigma_{j,l} \in \mathcal{D}(B_j)$. Here both spaces $A$ and $B$ can have infinite dimensions.
\end{theorem}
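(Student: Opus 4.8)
The plan is to reduce the infinite-dimensional problem to a finite-dimensional one by truncating the system $A$ to its first $d$ Fock levels, apply the finite-dimensional objectivity machinery of \cite{Brandao2015QD} to the resulting $d$-dimensional problem, and pay a controlled penalty for the truncation via Lemma~\ref{lem:trunc_lem_2}. Concretely, for any candidate measure-and-prepare channels $\cE_j$, Lemma~\ref{lem:trunc_lem_2} together with the subsequent remark gives
\[
\|\Lambda_j - \cE_j\|_{\Diamond\bar{n}} \leq d\,\|J_T(\Lambda_j) - J_T(\cE_j)\|_1 + 4\sqrt{\frac{\bar{n}}{d}}\,,
\]
so the task reduces to bounding the trace distance between the truncated Choi operators, a purely finite-dimensional object living on the $d$-dimensional reference $A$ and on $B_j$.

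First I would form the truncated Choi state $J_T = \id_A \otimes \Lambda_{A'}(\Phi_{AA'})$ of the \emph{full} channel (Definition~\ref{def:truncchoi}), viewing it as a state of the $d$-dimensional reference $A$ together with the $N$ fragments $B_1,\ldots,B_N$. This is exactly the multipartite state to which the argument of \cite{Brandao2015QD} applies, the crucial point being that the total correlation between the reference and all fragments is bounded, $I(A:B_1\cdots B_N) \leq 2\log d$, since the reference is $d$-dimensional. I would then measure a subset of $k$ fragments; the outcome, read back through the maximally entangled $\Phi_{AA'}$, defines a single POVM $\{M_l\}$ on the input space $A$ (the same for every $j$), and the conditional post-measurement states of $B_j$ furnish the states $\sigma_{j,l}$, hence the channels $\cE_j$.

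The key estimate, following \cite{Brandao2015QD}, is information-theoretic: by the chain rule and subadditivity of mutual information, averaging over which $k$ fragments are measured forces the average conditional mutual information between $A$ and a remaining fragment to be at most $\sim (\log d)/k$, and Pinsker's inequality converts this into a trace-distance bound on $J_T(\Lambda_j) - J_T(\cE_j)$. Once the dimension factors from the Choi correspondence are accounted for, this yields a contribution of the form $\sqrt{2\ln(2)\,d^{4}\log d / k}$, which after multiplication by the factor $d$ from Lemma~\ref{lem:trunc_lem_2} becomes the stated term $\sqrt{2\ln(2)\,d^{6}\log d / k}$. The $k$ measured fragments must be excluded from $S$, costing at most a fraction $k/N$ of the fragments (the $2k/N$ term), while a Markov-inequality argument over the remaining fragments extracts a set $S$ with $|S|\geq(1-\delta)N$ on which the per-fragment bound is at most $1/\delta$ times the average—this is the origin of the overall $1/\delta$ prefactor. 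Collecting the three contributions gives $\|\Lambda_j-\cE_j\|_{\Diamond\bar{n}} \leq \tfrac{1}{\delta}\zeta$, with $d$, $k$, and $\bar{n}$ left free to be optimized later (balancing the three terms produces the $N^{-1/17}$ scaling quoted in Theorem~\ref{th:main1}).

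The main obstacle is the faithful transcription of the finite-dimensional objectivity argument of \cite{Brandao2015QD} to the truncated Choi state while correctly tracking the powers of $d$: in particular, ensuring that the single POVM $\{M_l\}$ extracted from the $k$ measured fragments simultaneously controls \emph{all} the remaining $\Lambda_j$, and that the Pinsker step is applied to the correct conditional states so that the $d^{6}\log d$ dependence emerges cleanly. By comparison, the truncation penalty $4\sqrt{\bar{n}/d}$ and the Markov step are routine once Lemma~\ref{lem:trunc_lem_2} and Proposition~\ref{prop:focktruncation} are in hand.
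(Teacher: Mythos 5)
Your proposal follows essentially the same route as the paper's proof: truncation via Lemma~\ref{lem:trunc_lem_2} (with the remark giving the $4\sqrt{\bar{n}/d}$ penalty), the finite-dimensional information-theoretic core of \cite{Brandao2015QD} applied to the truncated Choi state (Corollary~\ref{cor:brandao1a}, with the $d^2$ factor from their Lemma~1 and the extra $d$ from the Choi correspondence yielding $\sqrt{2\ln(2)d^6\log d/k}$), the $j$-independent POVM built from the conditional post-measurement states, the $2k/N$ cost for the measured fragments, and Markov's inequality for the $1/\delta$ factor and the set $S$. This matches the paper's argument step for step, so it is correct.
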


\noindent Before giving the proof we need a corollary of the proof of Theorem 1 in \cite{Brandao2015QD} (see Eq.~(16) in the Supplementary material of \cite{Brandao2015QD}).

\begin{corollary}\label{cor:brandao1a}

For a cptp map $\Lambda : \mathcal{D}(A) \rightarrow \mathcal{D}(B)$, with $B=B_1B_2\ldots B_N$, we define the truncated Choi-Jamio\l kowski state of $\Lambda$ as $\rho_{AB_1...B_N}  :=  \id_{A}\otimes\Lambda_{A'} (\Phi_{AA'})$, with the $d$-dimensional maximally entangled state $\Phi_{AA'} = d^{-1} \sum_{k,k'=0}^{d-1} \ket{k,k}\bra{k',k'}$.  The quantum-classical channels ${\cal M}_1,...,{\cal M}_N$ are defined as ${\cal M}_i(X) := \sum_l \Tr(N_{i,l}X) \ket{l}\bra{l}$, for given POVMs $\{N_{i,l}\}_l$. Fixing an integer $m$, there exists another integer $q\leq m$ and a set of indices $J:=(j_1,...,j_{q-1})$, such that
\beq
\bbE_{j \notin J} \max_{{\cal M}_j} \| \id_{A} \otimes {\cal M}_j \left( \rho_{AB_j} - \bbE_z \rho_{A}^z \otimes \rho_{B_j}^z \right) \|_1 \leq \sqrt{2 \ln(2) \log(d) \over m}.
\eeq
Here the $j $'s are sampled from the uniform distribution over $\{0,1,...,N\}$ without replacement and $\rho_{AB_j}^z$ is the post-measurement state on $AB_j$ conditioned on obtaining $z=\{l_{j_1},...,l_{j_{q-1}}\}$ when measuring ${\cal M}_{j_1},...,{\cal M}_{j_{q-1}}$ in the subsystems $B_{j_1},...,B_{j_{q-1}}$ of $\rho$. The associated probability reads
\beq
p(z)=\Tr\left({\id_{A} \otimes \Lambda_{A'} (\Phi_{AA'})\,N_{j_1,l_{j_1}}\otimes...\otimes N_{j_{q-1},l_{j_{q-1}}}}\right)
\eeq
(see \cite{Brandao2015QD} for a more thorough description of these results).
\end{corollary}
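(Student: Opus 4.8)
The plan is to transfer the ``information cannot be everywhere'' argument of Brand\~ao et al.~\cite{Brandao2015QD} essentially verbatim, exploiting that after the Fock truncation the system $A$ is effectively finite dimensional. Indeed, tracing out $A'$ in the truncated Choi state gives $\Tr_{A'}\Phi_{AA'}=\Pi_d/d$, the maximally mixed state on the $d$-dimensional truncated Fock space of $A$, so $A$ carries entropy $S(A)=\log d$ and every step of the finite-dimensional proof applies with $d$ in place of the ambient dimension. First I would eliminate the maximisation $\max_{\cM_j}$ and the measurement on $B_j$ by passing to an information quantity: since applying a channel cannot increase the trace norm, for each outcome $z$ and each candidate $\cM_j$ Pinsker's inequality gives $\|\id_A\otimes\cM_j(\rho_{AB_j}^z-\rho_A^z\otimes\rho_{B_j}^z)\|_1\le\sqrt{2\ln 2\,I_M(A\!:\!B_j)_{\rho^z}}$, where $I_M$ denotes the mutual information optimised over local measurements on $B_j$. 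Convexity in $z$ (triangle inequality) moves $\bbE_z$ outside the norm, and concavity of the square root (Jensen) lets me pull $\bbE_{j\notin J}\bbE_z$ inside a single root, reducing the claim to the information bound $\bbE_{j\notin J}I_M(A\!:\!B_j\,|\,Z_J)\le\log d/m$, with $Z_J$ the register recording the outcomes $z$ of measuring the fragments in $J$.

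The source of the $\log d/m$ scaling is the telescoping/chain-rule mechanism. For the classical registers $L_i=\cM_i(B_i)$ produced by any fixed global measurement of all fragments, and for any ordering $\pi$, the classical chain rule $I(A\!:\!L_1\ldots L_N)=\sum_{t}I(A\!:\!L_{\pi(t)}\,|\,L_{\pi(1)}\ldots L_{\pi(t-1)})$ holds exactly; because the $A$-marginal is maximally mixed, the left-hand side is at most $S(A)=\log d$. Averaging over a uniformly random ordering $\pi$ and over a stopping index drawn from $\{1,\ldots,m\}$ converts the telescoping sum into an average carrying the prefactor $1/m$, so some value $q\le m$ together with the corresponding set $J=(j_1,\ldots,j_{q-1})$ realises a conditional increment no larger than the average $\log d/m$; this is exactly the asserted existence of $q$ and $J$. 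Substituting back reproduces the stated $\sqrt{2\ln 2\,\log d/m}$, the factor $2\ln 2$ coming from Pinsker and the single (rather than double) $\log d$ from the classical-quantum nature of the register, i.e.\ from $A$ being maximally mixed.

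I expect the genuine obstacle to be the interplay between the maximisation $\max_{\cM_j}$ in the statement and the single fixed global measurement required by the chain rule: the measurement best distinguishing $\rho_{AB_j}^z$ from its product depends on both $z$ and $j$, whereas the telescoping identity above controls only the terms of one measurement fixed in advance. Upgrading the fixed-measurement increment to the worst-case $\cM_j$ while conditioning on a fixed measurement of $J$ is precisely the measured-conditional-mutual-information estimate underlying the Brand\~ao--Harrow product-state approximation invoked in \cite{Brandao2015QD}; I would import that estimate rather than re-derive it, checking only that it is insensitive to the (possibly infinite) dimension of the fragments $B_j$ and depends on $A$ solely through $S(A)=\log d$. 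The remaining points---that measuring the fragments in $J$ does not disturb the $AB_j$ marginal, so that $\rho_{AB_j}=\bbE_z\rho_{AB_j}^z$ and the displayed difference is genuinely $\bbE_z(\rho_{AB_j}^z-\rho_A^z\otimes\rho_{B_j}^z)$, and that the conditioning probabilities are the $p(z)$ written in the statement---are routine bookkeeping.
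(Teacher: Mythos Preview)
Your proposal is correct and aligned with the paper's treatment. The paper does not give an independent proof of this corollary at all: it simply extracts the statement from the proof of Theorem~1 in \cite{Brandao2015QD} (their supplementary Eq.~(16)), observing that the truncated Choi state has $S(A)=\log d$ so that the finite-dimensional argument applies verbatim. Your sketch is a faithful reconstruction of precisely that argument---Pinsker, then the iterative/telescoping information-locking mechanism of Brand\~ao--Harrow with budget $S(A)=\log d$---and you correctly identify the only nontrivial point, namely that the $\max_{\cM_j}$ must be absorbed into the conditional-mutual-information increment before the pigeonhole step, which is exactly the content imported from \cite{Brandao2015QD}. Your remark that the bound depends on the $B_j$ only through classical registers, hence is dimension-free on the environment side, is the one observation the paper also needs (and uses tacitly) to justify applying the finite-dimensional corollary here.
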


\noindent We can now prove Theorem \ref{th:main1_apx}.
\begin{proof}

All the states involved in Corollary \ref{cor:brandao1a} have support on a $d-$dimensional subspace of system $A$ and system $A'$; we can exploit Lemma 1 in \cite{Brandao2015QD} to obtain
\begin{align}
\|\rho_{AB_j} &- \bbE_z \rho_{A}^z \otimes \rho_{B_j}^z\|_1	 \\
&\leq d^2 \max_{{\cal M}_j} \| \id_{A} \otimes {\cal M}_j \left( \rho_{AB_j} - \bbE_z \rho_{A}^z \otimes \rho_{B_j}^z \right) \|_1. \notag
\end{align}
Therefore, by combining this with Corollary \ref{cor:brandao1a}, we get

\begin{align}\label{18-l}
\bbE_{j \notin J}\|\rho_{AB_j} &- \bbE_z \rho_{A}^z \otimes \rho_{B_j}^z\|_1	 \\
&\leq d^2 \bbE_{j \notin J} \max_{{\cal M}_j} \| \id_{A} \otimes {\cal M}_j \left( \rho_{AB_j} - \bbE_z \rho_{A}^z \otimes \rho_{B_j}^z \right) \|_1 \\
&= \sqrt{2 \ln(2) d^4 \log(d) \over m} .
\end{align}

Using Lemma \ref{lem:trunc_lem_2} we can bound the \emph{energy-constrained diamond norm} distance of two maps by the distance of their truncated Choi-Jamio\l kowski states, to find
\beq\label{20}
\| \Lambda_j - \cE_j  \|_{\Diamond\bar{n}} \leq d \| \rho_{AB_j} - \bbE_z \rho_{A}^z \otimes \rho_{B_j}^z \|_1 + 4\sqrt{\bar{n} \over d},
\eeq
where the action of  $\cE_j$ is explicitly that of measure-and-prepare channel on the lowest-energy $d$-dimensional subspace, as given by
\beq
\cE_j(X) := d\, \bbE_z \Tr ((\rho_A^z)^T X) \rho_{B_j}^z,
\eeq
for $X$ an Hermitian operator with support on $\myspan\{\ket{0},\ket{1},\ldots,\ket{d-1}\}$.
We can complete the action of $\cE_j$ as we wish on the orthogonal space, so to make it overall a measure-and-prepare channel on the whole space; e.g., we can define $\cE_j(X)=\Tr(X)\sigma$, for $\sigma$ a fixed normalized state of $B_j$, for any Hermitian operator $X$ with support on $\myspan\{\ket{d},\ket{d+1},\ldots\}$, then defining the general action of $\cE_j$ on an arbitrary Hermitian operator $X$ as $\cE_j(X)=\cE_j(\Pi_dX\Pi_d)+cE_j((\openone-\Pi_d)X(\openone-\Pi_d)).$
Note that the POVM $\{ d p(z) (\rho_A^z)^T \}$ is independent of $j$. It can easily be confirmed that the truncated Choi-Jamio\l kowski state of $\cE_j$ is indeed $\bbE_z \rho_A^z \otimes \rho_{B_j}^z$. Note also that the truncated Choi-Jamio\l kowski state of $\Lambda_j$ is $\rho_{AB_j}$, by definition. We then have
\begin{align}
\bbE_{j \notin J} \| \Lambda_j - \cE_j  \|_{\Diamond\bar{n}} &\leq d \bbE_{j \notin J} \| \rho_{AB_j} - \bbE_z \rho_A^z \otimes \rho_{B_j}^z \|_1 + 4\sqrt{\bar{n} \over d} \\
&\leq \sqrt{2 \ln(2) d^6 \log d \over m} + 4\sqrt{\bar{n} \over d}.
\end{align}
Then
\begin{align}
\bbE_j \| \Lambda_j - \cE_j  \|_{\Diamond\bar{n}} &={\mathbb P}(j \notin J) \bbE_{j \notin J} \| \Lambda_j - \cE_j  \|_{\Diamond\bar{n}} + {\mathbb P}(j \in J) \bbE_{j \in J} \| \Lambda_j - \cE_j  \|_{\Diamond\bar{n}} \\
&\leq \bbE_{j \notin J} \| \Lambda_j - \cE_j  \|_{\Diamond\bar{n}} + {m \over N} \bbE_{j \in J} \| \Lambda_j - \cE_j  \|_{\Diamond\bar{n}} \label{22} \\
&\leq \sqrt{2 \ln(2) d^6 \log d \over m} + 4\sqrt{\bar{n} \over d} + {2m \over N} = \zeta,
\end{align}
where we have used that the diamond norm between two cptp maps is bounded by 2. Applying Markov's inequality gives
\begin{align}\label{Markov}
\mathbb P \left(  \| \Lambda_j - \cE_j  \|_{\Diamond\bar{n}} \geq {\zeta \over \delta} \right) \leq \delta,
\end{align}
which completes the proof.
\end{proof}

\subsubsection{Optimising over $m$ and $d $}

\noindent The parameters $d$ and $m$ can be freely chosen to minimize $\zeta$. We find
\beq
\zeta = f(d,m) \rightarrow f(d,m_{\text{min}}) = 4 \sqrt{\bar{n} \over d} + \left( {27 \ln(2) d^6 \log(d) \over N} \right)^{1/3}.
\eeq
In order to analytically optimise this over $d$, we can use the conservative bound $\ln d\leq d$:\\
\begin{align}
f(d,m_{\text{min}}) &= 4 \sqrt{\bar{n} \over d} + \left( {27 d^6 \ln(d) \over N} \right)^{1/3}\leq 4 \sqrt{\bar{n} \over d} + \left( {27 d^7  \over N} \right)^{1/3} = \tilde{f}(d,m_{\text{min}}). \label{f_d_preopt}
\end{align}
Optimising over $ d $ then gives
\beq
\tilde{f}(d,m_{\text{min}}) \rightarrow \tilde{f}(d_{\text{min}},m_{\text{min}}) = 17 \left(\frac{2}{7}\right)^{14/17}\left(\frac{\bar{n}^7}{N}\right)^{1/17} \approx 6.06 \left(\frac{\bar{n}^7}{N}\right)^{1/17}.
\eeq
Alternatively, we can numerically optimise equation (\ref{f_d_preopt}) directly. The results of this are shown in the main text, together with the analytical optimisation.



\section{Proving theorem two: Exponential cut-off \label{AppB}}
We begin by defining the \textit{modified Choi-Jamio\l kowski state} of a cptp map $\Lambda$ as $J(\Lambda) := \id_A \otimes \Lambda_{A'} [ \ket{\phi}\bra{\phi} ]$ for the entangled state $\ket{\phi}$ given by
\beq \label{phi}
\ket{\phi} := \cN \sum_{j=0}^{\infty} \phi_j \ket{j,j}_{AA'}
\eeq
where $\phi_j^2 \equiv 1 / e^{\omega j}$ and $\cN = \sqrt{1-e^{-\omega}}$, for $\omega >0$. Such a state is a two-mode squeezed state, with local reductions equal to thermal Gibbs  states
	$\gamma(\omega)$ given by
	\beq
	\gamma(\omega) = (1-e^{-\omega})e^{-\omega \hat n}.\label{gibbo}
	\eeq
	We note that such local thermal states have energy expectation $\tilde{n}\equiv\langle \hat{n} \rangle =  1/(e^\omega -1)$. We remark that for our specific choice of state, $(\Pi_d \otimes \bbI) \ket{\phi}=(\bbI\otimes \Pi_d) \ket{\phi}=\cN \sum_{j=0}^{d-1} \phi_j \ket{j,j}_{AA'}$.

\begin{lemma}\label{trunc_1}
For $L_{AB} = \tau - \sigma $, where $\tau=J(\Lambda_a)$ and $\sigma=J(\Lambda_b)$ are modified Choi-Jamio{\l}kowski states for cptp maps $\Lambda_a$ and $\Lambda_b$, we have
\begin{align}
\|L_{AB}\|_1 \leq d^2 \max_{\cM_B} \| \id_A \otimes \cM_B (L_{AB}) \|_1 + 4 \sqrt{ 1 \over e^{\omega d}},
\end{align}
where $d$ is a positive integer (the truncation dimension) and the maximum is taken over local measurement maps $\cM_B(Y) =
\sum_l \Tr(N_l Y) \ket{l}\bra{l}$, with a POVM $\{N_l\}$ and orthonormal states $\{\ket{l}\}$.
\end{lemma}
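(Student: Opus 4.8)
The plan is to reduce this infinite-dimensional statement to a finite-dimensional one by truncating system $A$ to its lowest $d$ Fock levels, applying the finite-dimensional measurement lemma (Lemma~1 of \cite{Brandao2015QD}) to the truncated operator, and separately controlling the error the truncation introduces. Concretely, I would set $L_T := (\Pi_d\otimes\bbI_B)\,L_{AB}\,(\Pi_d\otimes\bbI_B)$, the operator obtained by projecting the $A$ side onto $\myspan\{\ket0,\ldots,\ket{d-1}\}$, and then use the triangle inequality in the form $\|L_{AB}\|_1\leq\|L_T\|_1+\|L_{AB}-L_T\|_1$, treating the two pieces by different tools: the first by the dimension-dependent measurement lemma, the second by a tail estimate on the thermal weight.

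For the truncation error, the key observation is that the reduced state of either modified Choi operator $\tau=J(\Lambda_a)$ or $\sigma=J(\Lambda_b)$ on $A$ equals the thermal Gibbs state $\gamma(\omega)$ of Eq.~\eqref{gibbo}, since $\Lambda_a,\Lambda_b$ act only on $A'$ and leave the $A$ marginal of $\ket\phi\bra\phi$ untouched. Because $\Tr[\Pi_d\,\gamma(\omega)]=\sum_{j=0}^{d-1}(1-e^{-\omega})e^{-\omega j}=1-e^{-\omega d}$, the discarded weight is exactly $e^{-\omega d}$. Writing $\ket{\phi_T}:=(\Pi_d\otimes\bbI)\ket\phi=\cN\sum_{j=0}^{d-1}\phi_j\ket{j,j}$ as in the remark preceding the lemma, the fact that $\Pi_d$ on $A$ commutes with the action of the maps on $A'$ gives $\tau_T:=(\Pi_d\otimes\bbI_B)\tau(\Pi_d\otimes\bbI_B)=\id_A\otimes(\Lambda_a)_{A'}[\ket{\phi_T}\bra{\phi_T}]$, and likewise for $\sigma$. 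Since any cptp map is trace-norm contractive on Hermitian operators, $\|\tau-\tau_T\|_1\leq\|\ket\phi\bra\phi-\ket{\phi_T}\bra{\phi_T}\|_1$, and a direct computation in the two-dimensional span of $\ket\phi$ and $\ket{\phi_T}$ (where $\langle\phi|\phi_T\rangle=\|\phi_T\|^2=1-e^{-\omega d}$) yields $\|\ket\phi\bra\phi-\ket{\phi_T}\bra{\phi_T}\|_1=\sqrt{4e^{-\omega d}-3e^{-2\omega d}}\leq 2\sqrt{e^{-\omega d}}$. Adding the analogous bound for $\sigma$ gives $\|L_{AB}-L_T\|_1\leq 4\sqrt{1/e^{\omega d}}$, the additive term in the claim.

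For the main term, I would note that $L_T$ has support on the $d$-dimensional subspace $\myspan\{\ket0,\ldots,\ket{d-1}\}$ of $A$, so the finite-dimensional Lemma~1 of \cite{Brandao2015QD} applies and gives $\|L_T\|_1\leq d^2\max_{\cM_B}\|\id_A\otimes\cM_B(L_T)\|_1$, with $\cM_B$ ranging over local measurement maps on $B$. It then remains to replace $L_T$ by $L_{AB}$ inside the measured norm. Since $\Pi_d$ acts on $A$ while $\cM_B$ acts on $B$, the two commute and one checks that $\id_A\otimes\cM_B(L_T)=(\Pi_d\otimes\bbI)\,[\id_A\otimes\cM_B(L_{AB})]\,(\Pi_d\otimes\bbI)$, the projector now acting on the surviving $A$ register. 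The contraction inequality $\|\Pi X\Pi\|_1\leq\|X\|_1$, valid for any orthogonal projector $\Pi$, gives $\|\id_A\otimes\cM_B(L_T)\|_1\leq\|\id_A\otimes\cM_B(L_{AB})\|_1$ for every $\cM_B$, hence $\|L_T\|_1\leq d^2\max_{\cM_B}\|\id_A\otimes\cM_B(L_{AB})\|_1$. Combining this with the truncation-error bound completes the proof.

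The main obstacle I anticipate is the bookkeeping around the measurement lemma: the factor $d^2$ stems from the $d$-dimensional $A$ system, whereas the measurement $\cM_B$ acts on the possibly infinite-dimensional output $B$, so one must verify that Lemma~1 of \cite{Brandao2015QD} really applies with $A$ effectively $d$-dimensional and that passing from $L_T$ to $L_{AB}$ inside the measured norm does not reintroduce the truncated part. The only other delicate point is securing the sharp constant $4$: this needs the exact two-dimensional trace-norm evaluation above (equivalently, a pure-state fidelity argument), rather than the cruder estimate one would obtain by applying the gentle measurement lemma directly to the unnormalized truncated state, which would leave an extra additive $O(e^{-\omega d})$ term.
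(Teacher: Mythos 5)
Your proposal is correct and follows essentially the same route as the paper: truncate $A$ with $\Pi_d\otimes\bbI$, apply Lemma~1 of Brand\~ao et al.\ to the truncated block (whose $A$-support is $d$-dimensional), remove the projector inside the measured norm by the commutation-plus-contractivity argument (the paper cites the pinching theorem for this step), and bound the tail using the thermal weight $e^{-\omega d}$. The only cosmetic difference is in the tail estimate: you compute $\|\ket\phi\bra\phi-\ket{\phi_T}\bra{\phi_T}\|_1$ exactly via a rank-2 eigenvalue calculation and then use cptp contractivity, whereas the paper applies the (unnormalized) gentle-measurement lemma to each modified Choi state --- both give exactly $2\sqrt{e^{-\omega d}}$ per state, so your closing concern that the gentle-measurement route would introduce an extra additive $O(e^{-\omega d})$ term is unfounded, though harmless.
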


\noindent To prove Lemma \ref{trunc_1} we first need the following theorem.

\begin{theorem}
\label{truc_dist}
With $\rho=J(\Lambda)$ for cptp map $\Lambda$, and defining the truncated state $\rho_d:=(\Pi_d \otimes \bbI) \rho (\Pi_d \otimes \bbI)$ where $\Pi_{d} =\sum_{i=0}^{d-1} \proj{i}$, we have
\begin{equation}
\|\rho - \rho_d \|_1 \leq 2\sqrt{ 1 \over e^{\omega d}}.
\end{equation}
\end{theorem}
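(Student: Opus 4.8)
The plan is to exploit the product structure of the truncation together with the trace-norm contractivity of quantum channels, thereby collapsing the infinite-dimensional statement into a two-dimensional linear-algebra calculation on the entangled input $\ket{\phi}$. First I would observe that in $\rho_d=(\Pi_d\otimes\bbI)\rho(\Pi_d\otimes\bbI)$ the projector $\Pi_d$ acts on the reference system $A$, whereas $\Lambda$ acts on $A'$ (producing $B$); since these operate on distinct tensor factors they commute, so the truncation can be pulled back through the channel:
\[
\rho_d=(\id_A\otimes\Lambda_{A'})\bigl[\ket{\phi_d}\bra{\phi_d}\bigr],\qquad \ket{\phi_d}:=(\Pi_d\otimes\bbI)\ket{\phi}=\cN\sum_{j=0}^{d-1}\phi_j\ket{j,j},
\]
where the last equality is the remark recorded just before the theorem. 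Because $\id_A\otimes\Lambda_{A'}$ is completely positive and trace preserving, it is contractive in trace norm on Hermitian inputs, which immediately gives $\|\rho-\rho_d\|_1\le\|\,\ket{\phi}\bra{\phi}-\ket{\phi_d}\bra{\phi_d}\,\|_1$, eliminating $\Lambda$ from the problem entirely.

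Second, I would evaluate the remaining norm directly. Writing $\ket{\phi}=\ket{\phi_d}+\ket{\phi_d^\perp}$ with $\ket{\phi_d^\perp}:=\bigl((\bbI-\Pi_d)\otimes\bbI\bigr)\ket{\phi}$ orthogonal to $\ket{\phi_d}$, the difference expands as $\ket{\phi}\bra{\phi}-\ket{\phi_d}\bra{\phi_d}=\ket{\phi_d}\bra{\phi_d^\perp}+\ket{\phi_d^\perp}\bra{\phi_d}+\ket{\phi_d^\perp}\bra{\phi_d^\perp}$, a Hermitian operator of rank at most two supported on $\myspan\{\ket{\phi_d},\ket{\phi_d^\perp}\}$. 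Setting $a=\|\ket{\phi_d}\|$ and $b=\|\ket{\phi_d^\perp}\|$ with $a^2+b^2=1$, in the orthonormal basis obtained by normalizing these two vectors the operator becomes $\bigl(\begin{smallmatrix}0 & ab\\ ab & b^2\end{smallmatrix}\bigr)$, whose determinant $-a^2b^2$ is negative; its two eigenvalues therefore have opposite signs, so the trace norm equals $\lambda_+-\lambda_-=\sqrt{b^4+4a^2b^2}=b\sqrt{4-3b^2}\le 2b$.

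Third, I would compute $b^2$ via the geometric series: since $b^2=1-\bra{\phi}\Pi_d\otimes\bbI\ket{\phi}$ and
\[
\bra{\phi}\Pi_d\otimes\bbI\ket{\phi}=\cN^2\sum_{j=0}^{d-1}e^{-\omega j}=(1-e^{-\omega})\,\frac{1-e^{-\omega d}}{1-e^{-\omega}}=1-e^{-\omega d},
\]
we obtain $b^2=e^{-\omega d}$. Chaining the three steps yields $\|\rho-\rho_d\|_1\le 2b=2\sqrt{1/e^{\omega d}}$, exactly as claimed.

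The argument is elementary throughout, and I do not anticipate a genuine obstacle. The one point deserving care is the rank-two trace-norm evaluation: one must verify that the two eigenvalues have opposite signs so that $\|\cdot\|_1=\lambda_+-\lambda_-$ rather than $\lambda_++\lambda_-$, which is guaranteed by the strictly negative determinant whenever $a>0$. The conceptual crux is simply that the truncation on $A$ commutes with the channel on $A'$, which is what allows the infinite-dimensional map to drop out and reduces the estimate to a finite, explicitly summable quantity.
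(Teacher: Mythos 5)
Your proof is correct, but it takes a genuinely different route from the paper's. The paper proves this theorem as a one-step application of the gentle measurement lemma (quoted from Refs.~\cite{winter1999coding,ogawa2007making} and already stated in Appendix~A): taking $X=\sqrt{X}=\Pi_d\otimes\bbI$, it computes $\Tr(\rho_d)=\Tr\bigl[(\Pi_d\otimes\bbI)\proj{\phi}(\Pi_d\otimes\bbI)\bigr]=\cN^2\sum_{j=0}^{d-1}e^{-\omega j}=1-e^{-\omega d}$ using trace preservation (the same commutation-plus-geometric-series step you use), and the lemma then gives $\|\rho-\rho_d\|_1\le 2\sqrt{\Tr(\rho)-\Tr(\rho_d)}=2\sqrt{e^{-\omega d}}$ directly on the mixed Choi-type state. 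You instead eliminate the channel first, via trace-norm contractivity of the CPTP map $\id_A\otimes\Lambda_{A'}$, and then evaluate $\|\proj{\phi}-\proj{\phi_d}\|_1$ exactly by a rank-two eigenvalue computation, getting $\sqrt{4b^2-3b^4}\le 2b$ with $b^2=e^{-\omega d}$; in effect you re-derive the gentle measurement bound from scratch in the special case of a pure state and a projector, which is precisely the case needed here. Your version buys self-containedness (no external lemma) and an exact intermediate value, $\sqrt{4e^{-\omega d}-3e^{-2\omega d}}$, marginally tighter than $2e^{-\omega d/2}$ before the final rounding; the paper's version buys brevity (two lines, given a lemma it needs elsewhere anyway) and applies verbatim to situations where the truncated state is not the channel image of a pure state. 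Both arguments hinge on the same structural facts: the projector on $A$ commutes with the channel acting on $A'$, and $\bra{\phi}(\Pi_d\otimes\bbI)\ket{\phi}=1-e^{-\omega d}$. Your attention to the sign structure of the $2\times 2$ matrix was warranted, and the condition $ab\neq 0$ it requires indeed holds, since $a^2=1-e^{-\omega d}>0$ and $b^2=e^{-\omega d}>0$ for all finite $d$ and $\omega>0$.
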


\begin{proof}
The gentle measurement Lemma in \cite{winter1999coding,ogawa2007making} shows that, for density matrix $\rho$ and linear operator $X$ s.t. $0 \leq X \leq \bbI$, we have
\beq \label{gent_meas}
\| \rho - \sqrt{X} \rho \sqrt{X} \|_1 \leq 2\sqrt{\Tr(\rho) - \Tr(\rho X)}.
\eeq
In our case we take $X=\sqrt{X} = \Pi_d \otimes \bbI$. Since $\Lambda$ is trace preserving, we have
\[
\Tr(\rho_d)=\Tr((\Pi_d \otimes \bbI) \proj{\phi} (\Pi_d \otimes \bbI))=\cN^2 \sum_{j=0}^{d-1} e^{-\omega j}=\cN^2 {1-e^{-\omega d} \over 1 - e^{-\omega} }=1-e^{-\omega d}.
\]
Since $\rho$ is normalized, $\Tr(\rho)=1$, the direct application of the gentle measurement lemma gives us the result.
\end{proof}

\noindent We can now prove Lemma \ref{trunc_1}.

\begin{proof}
(Lemma \ref{trunc_1}) Writing $L_{AB} = \sum_{ij =0}^{\infty} \ket{i}\bra{j} \otimes L_{ij}$ and $(\pi_A^d \otimes \id_B) (L_{AB}) =  (\Pi_d \otimes \bbI) L_{AB} (\Pi_d \otimes \bbI)$, we have
\begin{align}
\|L_{AB}\|_1	&\leq	\| (\pi_A^d \otimes \id_B) (L_{AB}) \|_1 +
				\| ((\id_A - \pi_A^d) \otimes \id_B) (L_{AB}) \|_1  \\
				&= \| (\pi_A^d \otimes \id_B) (L_{AB}) \|_1 +
				\|\sum_{i\vee j \geq d} \ket{i}\bra{j}\otimes L_{ij} \|_1  \\
&\leq d^2 \max_{\cM_B} \| (\pi_A^d \otimes \cM_B) (L_{AB}) \|_1 + \|\sum_{i\vee j \geq d} \ket{i}\bra{j}\otimes L_{ij} \|_1 \\
&\leq d^2 \max_{\cM_B} \| (\id_A \otimes \cM_B) (L_{AB}) \|_1 + \|\sum_{i\vee j \geq d} \ket{i}\bra{j}\otimes L_{ij} \|_1,
\end{align}
where where the first inequality comes from Lemma 1 in the Supplementary notes of \cite{Brandao2015QD}, and the second one comes from the Pinching theorem \cite{bhatia2013matrix}.

We have
\begin{align}
\|\sum_{i\vee j \geq d} \ket{i}\bra{j} &\otimes L_{ij} \|_1 \\
&= \|L_{AB} - (\Pi_{d} \otimes \bbI) L_{AB} (\Pi_{d} \otimes \bbI) \|_1 \nonumber \\
&= \|(\tau-\sigma) - (\Pi_{d} \otimes \bbI) (\tau-\sigma) (\Pi_{d} \otimes \bbI) \|_1 \nonumber \\
&= \|\tau - \tau_{d} - (\sigma - \sigma_{d} ) \|_1 \nonumber \\
&\leq \|\tau - \tau_{d} \|_1 + \| \sigma - \sigma_{d} \|_1 \nonumber \\
&< 4\sqrt{ 1 \over e^{\omega d}},\nonumber
\end{align}
where in the last line we used Theorem \ref{truc_dist}, and $\tau_{d}$ and $\sigma_d$ are the truncations of $\tau$ and $\sigma$, respectively.
\end{proof} ~\\


\begin{definition} (Exponential cut-off diamond norm) For a linear map $\Lambda $, and constants $\omega > 0$, $\Omega>1$, let
\beq 
\|\Lambda\|_{\Diamond\omega\Omega}:=
\sup_{
	\substack{
		\Tr\left[ \rho e^{\omega \hat{n}_{A'}} \right] \leq \Omega
	}
} \|\id_A \otimes \Lambda_{A'} [\rho]\|_1\,.
\eeq
\end{definition}

\begin{lemma}\label{exp_cut_lem2}
(Modification of Lemma 2 in \cite{Brandao2015QD} for infinite dimensional systems with an exponential cut-off) For cptp maps $ \Lambda_0 $ and $ \Lambda_1$ we have
\beq
\| \Lambda_0 - \Lambda_1 \|_{\Diamond \omega \Omega} \leq \tilde{d} \| J(\Lambda_0) - J(\Lambda_1) \|_1.
\eeq
where
\beq
\tilde{d} = {\Omega e^{\omega} \over e^{\omega} - 1}.
\eeq
and the \textit{modified Choi-Jamio\l kowski state} of a cptp map $\Lambda$ is $J(\Lambda) := \id_A \otimes \Lambda_{A'} [ \ket{\phi}\bra{\phi} ]$ for the entangled state $\ket{\phi}$ given in Eq.~(\ref{phi}).
\end{lemma}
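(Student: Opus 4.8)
The plan is to adapt the finite-dimensional argument behind Lemma~2 of \cite{Brandao2015QD}, replacing the role played there by the maximally entangled state (and the associated dimension factor $d$) with our thermally weighted entangled state $\ket{\phi}$ of Eq.~(\ref{phi}) (and the factor $\tilde{d}$). Write $\Lambda = \Lambda_0 - \Lambda_1$, a Hermiticity-preserving map. By the purification/Schmidt argument of Remark~\ref{remark1}, it suffices to bound $\|\id_A \otimes \Lambda_{A'}[\proj{\psi}]\|_1$ over pure inputs $\ket{\psi}_{AA'}$ respecting the cut-off $\Tr[\rho_{A'} e^{\omega \hat{n}}] \leq \Omega$, with the ancilla $A$ taken isomorphic to the input $A'$ (an $\epsilon$-optimal state can be used if the supremum is not attained, exactly as in the footnote to Lemma~\ref{lem:trunc_lem_2}).

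The key observation is that $\ket{\phi}$ has strictly positive Schmidt coefficients $\cN \phi_j = \cN\, e^{-\omega j /2} \neq 0$ for every $j$. Consequently, any pure input can be generated from $\ket{\phi}$ by acting with an operator on the ancilla alone: writing $\ket{\psi}_{AA'} = \sum_{kj} \psi_{kj} \ket{k}_A \ket{j}_{A'}$ with coefficient matrix $\Psi = (\psi_{kj})$, I would set $\ket{\psi}_{AA'} = (R_A \otimes \id_{A'}) \ket{\phi}_{AA'}$ with $R_{kj} := \psi_{kj}/(\cN \phi_j)$, which is well defined precisely because no $\phi_j$ vanishes. Since $R$ acts on $A$ while $\Lambda$ acts on $A'$, it factors straight through the channel, giving
\beq
\id_A \otimes \Lambda_{A'}[\proj{\psi}] = (R_A \otimes \id_B)\, J(\Lambda)\, (R_A^\dagger \otimes \id_B),
\eeq
so that by the H\"older-type bound $\|AXB\|_1 \leq \|A\|_\infty \|X\|_1 \|B\|_\infty$ one obtains $\|\id_A \otimes \Lambda_{A'}[\proj{\psi}]\|_1 \leq \|R\|_\infty^2 \, \|J(\Lambda)\|_1$.

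It then remains to control the operator norm of $R$ using the energy cut-off. Since $RR^\dagger \geq 0$, its operator norm is dominated by its trace, and a direct computation with $R = \cN^{-1}\Psi\, \mathrm{diag}(\phi_j^{-1})$, using $\Psi^\dagger \Psi = \rho_{A'}^T$ (transposition leaving the trace against the diagonal $e^{\omega\hat n}$ unchanged), yields
\beq
\|R\|_\infty^2 = \|RR^\dagger\|_\infty \leq \Tr[RR^\dagger] = \frac{1}{\cN^2}\Tr\!\big[e^{\omega \hat{n}}\rho_{A'}\big] \leq \frac{\Omega}{\cN^2} = \frac{\Omega e^{\omega}}{e^{\omega}-1} = \tilde{d},
\eeq
where I invoked the cut-off constraint and $\cN^2 = 1 - e^{-\omega}$. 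Combining the last two displays and taking the supremum over admissible inputs gives the claim.

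The main obstacle I anticipate is not any single estimate but the infinite-dimensional bookkeeping: one must justify that $R$ is a genuine bounded operator and that both the factorisation through $\Lambda$ and the H\"older inequality are valid for trace-class operators on an infinite-dimensional space. The exponential cut-off is exactly what rescues these steps — it forces $\Tr[RR^\dagger]$ to be finite, hence $R$ bounded, and simultaneously supplies the numerical constant $\tilde{d}$. I expect the crude bound $\|RR^\dagger\|_\infty \leq \Tr[RR^\dagger]$ to be the source of the (likely non-tight) factor $\tilde{d}$.
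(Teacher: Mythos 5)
Your proposal is correct and follows essentially the same route as the paper's own proof: your filtering operator $R$ is exactly the paper's operator $C$ (well defined because the Schmidt coefficients $\cN\phi_j$ never vanish), the factorisation through the channel plus H\"older's inequality is identical, and your bound $\|R\|_\infty^2 \leq \Tr[RR^\dagger] = \cN^{-2}\Tr[\rho_{A'}e^{\omega\hat n}] \leq \Omega/(1-e^{-\omega}) = \tilde{d}$ is precisely the paper's estimate $\|C\|_\infty^2 \leq \|C\|_2^2 \leq \Omega/(1-e^{-\omega})$, including the careful handling of the transpose against the diagonal operator $e^{\omega\hat n}$.
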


\begin{proof}
Take $\ket{\psi}$ as the state optimal for the exponential cut-off diamond norm $\| \Lambda_0 - \Lambda_1 \|_{\diamond \omega \Omega}$ (there might not be an optimal state because the set is not closed, but we can get as close as we want). As noted in Remark~\ref{remark1} we only need to consider pure states. Any such state $\ket{\psi}=\sum_{i j=0}^{\infty} \psi_{ij} \ket{i,j}$ can written as $\ket{\psi}=(\Psi \otimes \bbI )\ket{\tilde{\psi}^+}$ for the (infinite) matrix of coefficients $\Psi=[\psi_{ij}]_{ij}$ and the unnormalized vector $\ket{\psi^+}=\sum_{j=0}^\infty \ket{j}\ket{j}$. It can also be obtained by local filtering on $\ket{\phi}$ as
$\ket{\psi} = (C \otimes \bbI )\ket{\phi}$
for $C_{ij}=\bra{i}C\ket{j}=\psi_{ij}/(\mathcal{N}\phi_j)=\psi_{ij}\sqrt{\frac{e^{\omega j}}{1-e^{-\omega}}}$. We can rewrite the matrix of coefficients $\Psi$ as $\Psi=\sqrt{1-e^{-\omega}}C(e^{\hat{n}\omega})^{-1/2}$, with the operator $\hat{n}$ diagonal in the considered basis. The reduced state $\rho_{A'}=\Tr_{A}(\proj{\psi}_{AA'})$ can be expressed as $\rho_{A'} = C^T C^* = (1-e^{-\omega})(e^{\hat{n}\omega})^{-1/2}C^TC^*(e^{\hat{n}\omega})^{-1/2}$. The exponential cut-off energy constraint on $\ket{\psi}$ can then be expressed as $\Omega \geq \Tr(\rho_{A'}e^{\hat{n}_{a'}\omega}) = (1-e^{-\omega}) \Tr((e^{\hat{n}_{A'}\omega})^{-1/2}C^TC^*(e^{\hat{n}_{A'}\omega})^{-1/2} e^{\hat{n}_{A'}\omega})=(1-e^{-\omega})\Tr(C^TC^*)$, where we have used the cyclic property of the trace. Thus, it must hold $\|C\|_\infty^2\leq \|C\|_2^2 =\Tr(C^\dagger C)\leq \Omega/(1-e^{-\omega})$.

We thus have
\[
\begin{aligned}
\| \Lambda_0 - \Lambda_1 \|_{\diamond \omega \Omega}
&= \|\id_A\otimes (\Lambda_0 - \Lambda_1)[\proj{\psi}]\|_1\\
&\leq \|C\|^2_\infty \|\id_A\otimes (\Lambda_0 - \Lambda_1)[\proj{\phi}]\|_1\\
&\leq \Omega/(1-e^{-\omega})\| J(\Lambda_0) - J(\Lambda_1) \|_1,
\end{aligned}
\]
which concludes the proof.
\end{proof}


\begin{lemma}\label{mutual}
For a density matrix $\rho_{AB} = \id_A \otimes \Lambda_{A'} [ \ket{\phi}\bra{\phi} ]$ where $\Lambda : \mathcal{D}(A') \rightarrow \mathcal{D}(B)$ is a cptp map and $\ket{\phi}$ is given in Eq.~(\ref{phi}), for $\rho_{AB}$ separable between $A$ and $B$ the mutual information is bounded by
\beq
I(A:B)_{\rho_{AB}} \leq {\varsigma}.
\eeq
Here ${\varsigma}$ is the entropy of the Gibbs state with energy $\tilde{n}=(e^{\omega}-1)^{-1}$, given by
\beq\label{bdd_en}
{\varsigma} = (\tilde{n}+1)\log (\tilde{n}+1) - \tilde{n} \log \tilde{n}.
\eeq
\end{lemma}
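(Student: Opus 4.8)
The plan is to reduce the claim to a statement about the single marginal entropy $S(\rho_A)$, using two ingredients: that $\rho_A$ is a fixed thermal state independent of $\Lambda$, and that separability forces the mutual information to be controlled by one marginal alone. Concretely, I would first identify the reduced state on $A$. Since $\Lambda_{A'}$ is trace preserving and acts only on $A'$, tracing out $B$ gives $\rho_A = \Tr_B \rho_{AB} = \Tr_{A'}\ket{\phi}\bra{\phi} = \gamma(\omega)$, the thermal Gibbs state of Eq.~(\ref{gibbo}), whose mean occupation is $\tilde n = 1/(e^\omega-1)$. This is exactly the local reduction of $\ket{\phi}$ already noted above. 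By the very definition of $\varsigma$ in the statement (the entropy of the Gibbs state with energy $\tilde n$), together with the standard evaluation of $-\sum_j p_j \log p_j$ for eigenvalues $p_j = (1-e^{-\omega})e^{-\omega j}$, we have $S(\rho_A) = \varsigma$. Hence the target bound is equivalent to $I(A:B) \leq S(\rho_A)$.

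The second step is to invoke separability. Writing $\rho_{AB} = \sum_i p_i \rho_A^i \otimes \rho_B^i$ (with an integral against the mixing measure in place of the sum if a discrete decomposition is unavailable), I introduce a classical register $X$ recording the index $i$, obtaining the extension $\rho_{XAB} = \sum_i p_i \proj{i}_X \otimes \rho_A^i \otimes \rho_B^i$. In this extension $A$ and $B$ are conditionally independent given $X$, so $I(A:B|X)=0$. The chain rule then yields $I(A:BX) = I(A:X) + I(A:B|X) = I(A:X)$, while monotonicity under discarding the subsystem $X$ from the $B$-side gives $I(A:BX) \geq I(A:B)$. Therefore $I(A:B) \leq I(A:X) = S(A) - S(A|X)$. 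Since $\rho_{XA}$ is a classical–quantum state, $S(A|X) = \sum_i p_i S(\rho_A^i) \geq 0$, so $I(A:B) \leq S(A) = \varsigma$, which is the claim. This is essentially the standard fact that for separable states the mutual information is bounded by the smaller marginal entropy.

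The main obstacle is the infinite-dimensional setting, where entropies and the chain rule require justification because individual terms could a priori diverge. The saving grace is that $S(\rho_A) = \varsigma$ is manifestly finite; this makes $I(A:B)$ finite and well-defined and legitimizes the chain-rule manipulations, since all the relevant conditional entropies are dominated by the finite marginal entropy $S(\rho_A)$. The separable decomposition may need to be handled in continuous (integral) form, but the conditioning-register argument carries over verbatim, with sums replaced by integrals against the mixing measure, so no additional structural difficulty arises.
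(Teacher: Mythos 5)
Your proof is correct, and it establishes the key inequality by a genuinely different route than the paper. The paper handles the separability step by citing the known ``disorder'' results of Horodecki et al.~and Nielsen--Kempe, namely that $S(A),S(B)\leq S(A,B)$ for separable states, from which $I(A:B)=S(A)+S(B)-S(A,B)\leq S(A)$ follows in one line; it then notes, as you do, that trace preservation of $\Lambda$ forces $\rho_A=\Tr_{A'}\proj{\phi}=\gamma(\omega)$, whose entropy is $\varsigma$ (the paper phrases this via the maximum-entropy property of the Gibbs state, which is actually unnecessary since $\rho_A$ \emph{equals} $\gamma(\omega)$; your direct evaluation of $S(\rho_A)$ is cleaner). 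You instead prove $I(A:B)\leq S(A)$ from scratch: extending the separable decomposition with a classical flag register $X$, using $I(A:B|X)=0$, the chain rule $I(A:BX)=I(A:X)+I(A:B|X)$, monotonicity under discarding $X$, and positivity of $S(A|X)$ for classical--quantum states. What the paper's citation-based route buys is brevity; what your route buys is a self-contained, elementary argument, and --- importantly for this paper --- a more honest treatment of the infinite-dimensional setting: the references the paper leans on are finite-dimensional results, invoked without comment, whereas your argument explicitly identifies the finiteness of $S(\rho_A)=\varsigma$ as what keeps all quantities well-defined. One residual subtlety, which you flag but could state more carefully, is that in infinite dimensions a separable state is by definition a trace-norm limit of convex combinations of products; the integral (continuous mixing measure) representation you fall back on is guaranteed by results of Holevo, Shirokov, and Werner, and in the actual application of the lemma the state $\bbE_z\,\rho_A^z\otimes\rho_{B_j}^z$ comes with an explicit discrete decomposition, so nothing breaks.
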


\begin{proof}
It was shown in \cite{horodecki1996quantum,nielsen2001separable} that, for any state that is separable between subsystems $A$ and $B$, we have $S(A),S(B) \leq S(A,B)$. The mutual information between $A$ and $B$ is given by $I(A:B) = S(A)+S(B)-S(A,B)$, and therefore $I(A:B) \leq S(A)$. Following \cite{winter2016tight}, the unique maximiser (among single-mode bosonic states) of the entropy subject to the constraint $\Tr(\rho \hat n) \leq \tilde n$ is the Gibbs state $\gamma(\omega)$ given in Eq.~\eqref{gibbo}. For such a state, \cite{winter2016tight,qi2016thermal} give the entropy as shown in Eq.~\eqref{bdd_en}. Since $\Lambda$ is trace preserving, the reduced state on $A$ of $\rho_{AB}$ is the same as that of $\ket{\phi}$.
\end{proof}


\begin{theorem}\label{main_th_exp}
Let $\Lambda  : \mathcal{D}(A) \rightarrow \mathcal{D}(B_1 \otimes \ldots \otimes B_N)$ be a cptp map. Define $\Lambda_j := \Tr_{ \backslash B_j} \circ \Lambda$ as the effective dynamics from $\mathcal{D}(A)$ to $\mathcal{D}(B_j)$ and fix a number $1 > \delta > 0$. Then there exists a POVM $\{M_k\}$ and a set $S \subseteq \{1,...,N\}$ with $|S| \geq (1-\delta)N$ such that, for all $j \in S$ (and for some truncation dimension $d$), we have
\beq
\|\Lambda_j - \cE_j  \|_{\diamond \omega \Omega} \leq {1 \over \delta} \zeta
\eeq
where the measure-and-prepare channel $\cE_j$ is given by
\beq
\cE_j(X) := \sum_l \Tr (M_k X) \sigma_{j,l},
\eeq
for states $\sigma_{j,l} \in \mathcal{D}(B_j)$, and
\beq
\zeta = \left( {27 \ln(2) \tilde{d}^2 d^4 {\varsigma} \over N} \right)^{1/3} + 4 \sqrt{ \tilde{d}^2 \over e^{\omega d}}
\eeq
where $\tilde{d} = {\Omega e^{\omega} / (e^{\omega} - 1)}$. Here ${\varsigma}$ is the entropy of the Gibbs state with energy $\tilde{n}$, as given above.
\end{theorem}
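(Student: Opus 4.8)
The plan is to transcribe the proof of Theorem~\ref{th:main1_apx} almost verbatim, substituting each of its three finite-dimensional ingredients with the exponential cut-off analogues already assembled in this appendix. In the finite-energy proof the chain was: (i) the Brand\~ao-style averaging bound of Corollary~\ref{cor:brandao1a}, governed by $\log d=S(A)$ for the $d$-dimensional maximally entangled reference; (ii) Lemma~1 of \cite{Brandao2015QD}, lifting a measurement-accessible distance to the full trace-norm distance at the cost of a factor $d^2$; and (iii) Lemma~\ref{lem:trunc_lem_2}, converting the trace-norm distance of Choi states into the energy-constrained diamond norm. Here the reference is instead the two-mode-squeezed state $\ket{\phi}$ of Eq.~\eqref{phi}, whose $A$-marginal is the Gibbs state $\gamma(\omega)$; the three ingredients become, respectively, an averaging bound governed by $\varsigma=S(\gamma(\omega))$ via Lemma~\ref{mutual}, the truncation inequality of Lemma~\ref{trunc_1}, and the Choi-to-diamond conversion of Lemma~\ref{exp_cut_lem2}.

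Concretely, I would first run the Brand\~ao averaging argument with $\rho_{AB_1\ldots B_N}=J(\Lambda)$ the modified Choi state. The only place where finite-dimensionality entered there was the bound $I(A:B_{j_1}\ldots B_{j_{q-1}})\le \log d$ on the information that the measured fragments carry about $A$; since each measured global state is classical--quantum and hence separable across the $A$ cut, Lemma~\ref{mutual} replaces this by $\varsigma$. This yields an index set $J$ with $|J|\le m-1$ and
\beq
\bbE_{j \notin J} \max_{\cM_j} \big\| \id_{A} \otimes \cM_j \big( \rho_{AB_j} - \bbE_z \rho_{A}^z \otimes \rho_{B_j}^z \big) \big\|_1 \le \sqrt{\tfrac{2 \ln 2\, \varsigma}{m}}.
\eeq
Applying Lemma~\ref{trunc_1} to $L_{AB_j}=\rho_{AB_j}-\bbE_z\rho_A^z\otimes\rho_{B_j}^z$ --- the difference of the modified Choi states of $\Lambda_j$ and of the measure-and-prepare channel $\cE_j$ whose modified Choi state is $\bbE_z\rho_A^z\otimes\rho_{B_j}^z$, defined and completed on all of $A$ exactly as in Theorem~\ref{th:main1_apx} --- supplies the factor $d^2$ together with the additive truncation error $4\sqrt{1/e^{\omega d}}$, and Lemma~\ref{exp_cut_lem2} then multiplies through by $\tilde d=\Omega e^\omega/(e^\omega-1)$. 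Taking the expectation over $j\notin J$ and absorbing $\tilde d, d^2$ under the roots gives
\beq
\bbE_{j\notin J}\|\Lambda_j-\cE_j\|_{\Diamond\omega\Omega}\le \sqrt{\tfrac{2\ln 2\,\tilde d^2 d^4 \varsigma}{m}}+4\sqrt{\tfrac{\tilde d^2}{e^{\omega d}}}.
\eeq
Crucially the POVM $\{M_k\}$ defining $\cE_j$ is the $j$-independent object inherited from the transpose of $\rho_A^z$, so the measured observable is the same for every fragment.

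To pass from $\bbE_{j\notin J}$ to $\bbE_j$ I would use that the exponential cut-off diamond norm of a difference of cptp maps is at most $2$, together with $|J|\le m-1<m$, so the excluded fragments contribute at most $2m/N$:
\beq
\bbE_j\|\Lambda_j-\cE_j\|_{\Diamond\omega\Omega}\le \sqrt{\tfrac{2\ln 2\,\tilde d^2 d^4\varsigma}{m}}+4\sqrt{\tfrac{\tilde d^2}{e^{\omega d}}}+\tfrac{2m}{N}.
\eeq
Minimizing the $m$-dependent part $a\,m^{-1/2}+2m/N$ with $a=\sqrt{2\ln 2\,\tilde d^2 d^4\varsigma}$ at $m\propto (aN)^{2/3}$ collapses the first and third terms into the single term $(27\ln 2\,\tilde d^2 d^4\varsigma/N)^{1/3}$, yielding exactly $\bbE_j\|\Lambda_j-\cE_j\|_{\Diamond\omega\Omega}\le\zeta$. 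A final application of Markov's inequality, $\mathbb P(\|\Lambda_j-\cE_j\|_{\Diamond\omega\Omega}\ge\zeta/\delta)\le\delta$, delivers the set $S$ with $|S|\ge(1-\delta)N$ and closes the argument, with $d$ left free to be optimized against $N$ afterwards.

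The step I expect to be the main obstacle is the first one: verifying that the Brand\~ao averaging machinery, originally stated for finite-dimensional $A$, survives the replacement of $\log d$ by the finite quantity $\varsigma$ attached to a state of unbounded support. The delicate point is that the information bound must read $I(A:\text{measured fragments})\le S(A)$ rather than the generic $2S(A)$, which is precisely where the separability hypothesis of Lemma~\ref{mutual} enters; one must confirm that the classical--quantum state obtained by measuring the fragments is genuinely separable across the $A$ cut so that $S(A)\le S(A,\text{rest})$ applies, and that no hidden compactness or finite-alphabet assumption in the original derivation is violated. Everything downstream --- the truncation error, the $\tilde d$ conversion, and the optimizations over $m$ and $d$ --- is then routine.
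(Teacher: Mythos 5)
Your proposal follows the paper's own proof essentially step for step: the Brand\~ao averaging bound with $\log d$ replaced by $\varsigma$ via the separability (classical--quantum) argument of Lemma~\ref{mutual}, then Lemma~\ref{trunc_1} for the $d^2$ lift plus the $4\sqrt{e^{-\omega d}}$ truncation error, Lemma~\ref{exp_cut_lem2} for the factor $\tilde d$, the $2m/N$ correction for the excluded fragments, the $m$-optimization collapsing to $\left(27\ln 2\,\tilde d^2 d^4\varsigma/N\right)^{1/3}$, and Markov's inequality. The only detail you gloss over is that, since $\ket{\phi}$ is not maximally entangled, the measure-and-prepare channel whose modified Choi state equals $\bbE_z\rho_A^z\otimes\rho_{B_j}^z$ cannot be built ``exactly as in Theorem~\ref{th:main1_apx}'' with a factor $d$, but needs the filtering operator $O=\sum_i\phi_i^{-1}\proj{i}$ inside the POVM $\{\cN^{-2}p(z)\,O^{\dagger}(\rho_A^z)^T O\}$, as the paper verifies in Appendix~\ref{AppApp}.
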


\noindent Before giving the proof we need a corollary of the proof of Theorem 1 in \cite{Brandao2015QD} (see Eq.~(16) in the Supplementary material of \cite{Brandao2015QD}).

\begin{corollary}\label{cor:brandao1}
For cptp map $\Lambda$ we define the modified Choi-Jamio\l kowski state of $\Lambda$ as $\rho_{AB_1...B_N}  := \id_{A} \otimes \Lambda_{A'} (\Phi_{AA'})$, where $\Phi = \ket{\phi}\bra{\phi}$ for $\ket{\phi} = \cN \sum \phi_k \ket{k,k}$ in Eq.~(\ref{phi}). The quantum-classical channels ${\cal M}_1,...,{\cal M}_N$ are defined as ${\cal M}_i(X) := \sum_l \Tr(N_{i,l}X) \ket{l}\bra{l}$, for POVM $\{N_{i,l}\}_l$. Fixing an integer $m$, there exists another integer $q\leq m$ and a set of indices $J:=(j_1,...,j_{q-1})$, such that
\beq
\bbE_{j \notin J} \max_{{\cal M}_j} \| \id_A \otimes {\cal M}_j \left( \rho_{AB_j} - \bbE_z \rho_A^z \otimes \rho_{B_j}^z \right) \|_1 \leq \sqrt{2 \ln(2) S(A) \over m}.
\eeq
Here the $j $'s are sampled from the uniform distribution over $\{0,1,...,N\}$ without replacement and $\rho_{AB_j}^z$ is the post-measurement state on $AB_j$ conditioned on obtaining $z=\{l_{j_1},...,l_{j_{q-1}}\}$ when measuring ${\cal M}_{j_1},...,{\cal M}_{j_{q-1}}$ in the subsystems $B_{j_1},...,B_{j_{q-1}}$ of $\rho$. The associated probability reads
\beq
p(z)=\Tr\left({\id_{A} \otimes \Lambda_{A'} (\Phi_{AA'})\,N_{j_1,l_{j_1}}\otimes...\otimes N_{j_{q-1},l_{j_{q-1}}}}\right)
\eeq
$S(A)$ is the entropy in subsystem $A$ of $\rho$.
\end{corollary}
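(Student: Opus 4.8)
The plan is to transcribe the argument behind Eq.~(16) of \cite{Brandao2015QD} to the present infinite-dimensional setting, the only genuinely new ingredient being that the finite-dimensional bound $\log d_A$ on the correlations of $A$ is replaced by the (finite) entropy $S(A)$ of the thermal reduction of the modified Choi-Jamio\l kowski state. Concretely, I would first set up a \emph{greedy sequential measurement process}: draw indices $j_1,j_2,\ldots,j_m$ uniformly at random without replacement, and at each step $t$ measure fragment $B_{j_t}$ with the quantum--classical channel $\cM_{j_t}$ that maximises the measured trace-distance $\|\id_A\otimes\cM_{j_t}(\rho_{AB_{j_t}}^z-\rho_A^z\otimes\rho_{B_{j_t}}^z)\|_1$ relative to the state conditioned on the previously recorded outcomes $z$. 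Each such measurement appends a classical register $X_{j_t}$, so that the joint state $\rho_{AX_{j_1}\cdots X_{j_m}}$ is classical on the $B$ side.

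The first key step is the \emph{global information bound}. Because the reduced state of $\rho=\id_A\otimes\Lambda_{A'}(\Phi_{AA'})$ on $A$ coincides with the reduction of $\ket{\phi}$, i.e.\ the thermal Gibbs state $\gamma(\omega)$, the entropy $S(A)$ is finite and equals the quantity $\varsigma$ identified in Lemma~\ref{mutual}. Since $\rho_{AX_{j_1}\cdots X_{j_m}}$ is classical--quantum on the measured side we have $S(A\,|\,X_{j_1}\cdots X_{j_m})\ge 0$, and hence
\beq
I(A:X_{j_1}\cdots X_{j_m})=S(A)-S(A\,|\,X_{j_1}\cdots X_{j_m})\le S(A).
\eeq
This replaces the $2\log d$ bound of the finite-dimensional proof and is precisely what makes the corollary meaningful in infinite dimensions.

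Next I would apply the \emph{chain rule and averaging}. Expanding
\beq
I(A:X_{j_1}\cdots X_{j_m})=\sum_{t=1}^{m} I\!\left(A:X_{j_t}\,\middle|\,X_{j_1}\cdots X_{j_{t-1}}\right),
\eeq
each summand is nonnegative and the total is $\le S(A)$, so the expectation (over the random ordering and the recorded outcomes) of the average summand is $\le S(A)/m$; by pigeonhole there is a step $q\le m$ whose expected conditional mutual information obeys $\bbE_{J,z}\,\bbE_{j\notin J}\, I(A:X_j\,|\,z)\le S(A)/m$, with $J=(j_1,\ldots,j_{q-1})$ and $j=j_q$ uniform over $\{1,\ldots,N\}\setminus J$. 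Finally, for each fixed conditioning I would invoke the measured Pinsker inequality: the relative entropy of the classical--quantum state against the corresponding product equals the measured mutual information $I(A:X_j\,|\,z)$, whence $\max_{\cM_j}\|\id_A\otimes\cM_j(\rho_{AB_j}^z-\rho_A^z\otimes\rho_{B_j}^z)\|_1\le\sqrt{2\ln 2\,I(A:X_j\,|\,z)}$; the greedy choice guarantees that the maximising measurement is exactly the one whose register enters the chain. Taking expectations and applying Jensen (concavity of $\sqrt{\cdot}$) then yields
\beq
\bbE_{j\notin J}\max_{\cM_j}\|\id_A\otimes\cM_j(\rho_{AB_j}-\bbE_z\rho_A^z\otimes\rho_{B_j}^z)\|_1\le\sqrt{\frac{2\ln 2\,S(A)}{m}},
\eeq
as claimed.

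I expect the main obstacle to be technical rather than conceptual: verifying that the information-theoretic manipulations — the chain rule, the identity $I(A:X)=S(A)-S(A|X)$, and the measured Pinsker bound — remain valid when $A$ is infinite-dimensional and the classical registers may carry countably or continuously many outcomes. The saving grace, which must be checked explicitly, is the finiteness of $S(A)=\varsigma$: the thermal reduction has finite entropy, so every mutual information in sight is finite and the continuity arguments underlying the chain rule and the nonnegativity $S(A|X)\ge 0$ go through. A secondary care point is the bookkeeping of the sampling without replacement, ensuring that conditioned on $J$ and $z$ the test index $j=j_q$ is genuinely uniform over $\{1,\ldots,N\}\setminus J$; this is what converts the per-realisation pigeonhole into the stated expectation bound.
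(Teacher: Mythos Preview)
Your proposal is correct and follows precisely the route the paper indicates: the paper itself does not give a detailed proof of this corollary, but simply states that it ``can be obtained adapting the derivation in equations (9)--(16) in the supplementary material of \cite{Brandao2015QD}, with the entropy $\log d_A$ in \cite{Brandao2015QD} replaced by $S(A)$, and the definition of $\rho$ taken here as the \emph{modified} Choi-Jamio\l kowski state.'' You have correctly identified that the finiteness of $S(A)$ (the thermal reduction $\gamma(\omega)$) is what makes the information-theoretic chain rule and Pinsker argument go through in infinite dimensions, and your outline of the greedy sequential measurement, chain rule, pigeonhole, and Jensen/Pinsker steps is exactly the Brand\~ao--Piani--Horodecki derivation with this one substitution.
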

\noindent Corollary \ref{cor:brandao1} can be obtained adapting the derivation in equations (9)--(16) in the supplementary material of \cite{Brandao2015QD},  with the entropy $\log d_A$ in \cite{Brandao2015QD} replaced by $S(A)$, and the definition of $\rho$ taken here as the \textit{modified} Choi-Jamio\l kowski state.

\begin{proof} (Theorem \ref{main_th_exp}) Using Lemma \ref{mutual}, take $S(A) = {\varsigma}$ in Corollary \ref{cor:brandao1}, which gives
\beq\label{corr1}
\bbE_{j \notin J} \max_{{\cal M}_j} \| \id_A \otimes {\cal M}_j \left( \rho_{AB_j} - \bbE_z \rho_A^z \otimes \rho_{B_j}^z \right) \|_1 \leq \sqrt{2 \ln(2) {\varsigma} \over m}.
\eeq
Lemma \ref{trunc_1} states that for truncation $d$ we have
\begin{align}
\|L_{AB}\|_1 \leq d^2 \max_{{\cal M}_B} \| \id_A \otimes {\cal M}_B (L_{AB}) \|_1 + 4 \sqrt{ 1 \over e^{\omega d}}.
\end{align}
We will see below that $\rho_{AB_j}$ and $\bbE_z \rho_A^z \otimes \rho_{B_j}^z$ are both modified Choi-Jamio\l kowski states, enabling the use of Lemma \ref{trunc_1} to give
\begin{align}
\|\rho_{AB_j} &- \bbE_z \rho_A^z \otimes \rho_{B_j}^z\|_1	 \\
&\leq d^2 \max_{{\cal M}_j} \| \id_A \otimes {\cal M}_j \left( \rho_{AB_j} - \bbE_z \rho_A^z \otimes \rho_{B_j}^z \right) \|_1 + 4 \sqrt{ 1 \over e^{\omega d}}. \notag
\end{align}
Combining this with Corollary \ref{cor:brandao1} we get
\begin{align}\label{18-l}
\bbE_{j \notin J}\|\rho_{AB_j} &- \bbE_z \rho_A^z \otimes \rho_{B_j}^z\|_1	 \\
&\leq d^2 \bbE_{j \notin J} \max_{{\cal M}_j} \| \id_A \otimes {\cal M}_j \left( \rho_{AB_j} - \bbE_z \rho_A^z \otimes \rho_{B_j}^z \right) \|_1 + 4 \sqrt{ 1 \over e^{\omega d}} \notag \\
&\leq d^2 \sqrt{2 \ln(2) {\varsigma} \over m} + 4 \sqrt{ 1 \over e^{\omega d}} \notag \\
&= \sqrt{2 \ln(2) d^4 {\varsigma} \over m} + 4 \sqrt{ 1 \over e^{\omega d}} \notag.
\end{align}
We now use Lemma \ref{exp_cut_lem2}, which states that
\beq
\| \Lambda_0 - \Lambda_1 \|_{\diamond \omega \Omega} \leq \tilde{d} \| J(\Lambda_0) - J(\Lambda_1) \|_1.
\eeq
where
\beq
\tilde{d} = {\Omega e^{\omega} \over e^{\omega} - 1}.
\eeq
Using this we find
\beq\label{20-l}
\| \Lambda_j - \cE_j  \|_{\diamond \omega \Omega} \leq \tilde{d} \| \rho_{AB_j} - \bbE_z \rho_A^z \otimes \rho_{B_j}^z \|_1.
\eeq
This is because the modified Choi-Jamio\l kowski state of $\Lambda_j$ is $\rho_{AB_j}$ by definition (see Corollary \ref{cor:brandao1} for the definition of the modified Choi-Jamio\l kowski state). The measure-and-prepare channel $\cE_j$ is explicitly given by
\beq\label{mp1}
\cE_j(X) := \cN^{-2} \bbE_z \Tr ((\rho_A^z)^T O X O^{\dagger}) \rho_{B_j}^z,
\eeq
where $O = \sum_i {1 \over \phi_i } \ket{i}\bra{i}$, and $\cN$ and $\phi_i$ are the same as in the entangled state $\Phi\equiv \ket{\phi}\bra{\phi}$ in Lemma \ref{exp_cut_lem2}. In Appendix~\ref{AppApp} we demonstrate that $\bbE_z \rho_A^z \otimes \rho_{B_j}^z$ is the modified Choi-Jamio\l kowski state of $\cE_j$.
Combining (\ref{18-l}) and (\ref{20-l}) gives
\begin{align}
\bbE_{j \notin J} \| \Lambda_j - \cE_j  \|_{\diamond \omega \Omega} &\leq \tilde{d} \bbE_{j \notin J} \| \rho_{AB_j} - \bbE_z \rho_A^z \otimes \rho_{B_j}^z \|_1 \\
&\leq \tilde{d} \left( \sqrt{2 \ln(2) d^4 {\varsigma} \over m} + 4 \sqrt{ 1 \over e^{\omega d}} \right) \\
&= \sqrt{2 \ln(2) \tilde{d}^2 d^4 {\varsigma} \over m} + 4 \sqrt{ \tilde{d}^2 \over e^{\omega d}}.
\end{align}
We then have
\begin{align}
\bbE_j \| \Lambda_j - \cE  \|_{\diamond \omega \Omega} &\leq \bbE_{j \notin J} \| \Lambda_j - \cE_j  \|_{\diamond \omega \Omega} + {m \over N} \bbE_{j \in J} \| \Lambda_j - \cE_j  \|_{\diamond \omega \Omega} \\
&\leq \sqrt{2 \ln(2) \tilde{d}^2 d^4 {\varsigma} \over m} + 4 \sqrt{ \tilde{d}^2 \over e^{\omega d}} + {2m \over N}.
\end{align}
We can minimise the right-hand-side with respect to $m$ to obtain
\begin{align}\label{exp_zeta}
\zeta = \left( {27 \ln(2) \tilde{d}^2 d^4 {\varsigma} \over N} \right)^{1/3} + 4 \sqrt{ \tilde{d}^2 \over e^{\omega d}}.
\end{align}
We use Markov's inequality (see Eq.~(\ref{Markov})) to complete the proof.
\end{proof}

\subsubsection{Minimising $\zeta $ with respect to the truncation dimension $d$.}

Our function to be minimised is of the form $(A/N^{1/3})d^{4/3}+B e^{-\omega d/2}$, where $A=\left( 27 \ln(2) \tilde{d}^2 {\varsigma} \right)^{1/3}$ and $B=4\tilde{d}$, see Eq.~\eqref{exp_zeta}. For $N\gg 1$, it can be shown that such a function is minimised by setting
\begin{align}
d\to d_{min}=\frac{2 W\left(\frac{81 B^3 N \omega ^4}{1024 A^3}\right)}{3 \omega }\simeq \frac{2 \ln \left(\frac{81 B^3 N \omega ^4}{1024 A^3}\right)}{3 \omega },
\end{align}
where $W(x)$ is the Lambert W function, which approximates to $\ln(x)$ for $x\gg1$. Substituting the above approximate value for $d_{\sf min}$ still gives a valid bound, since Eq.~\eqref{exp_zeta} holds for arbitrary $d$:
\beq
\zeta \leq
8 \left(\frac{\gamma_1}{N}\right)^{1/3}\left[1+\frac14\big(\ln(\gamma_2 N)\big)^{4/3}\right]\,,
\eeq
with
\begin{align}
\begin{split}
&\gamma_1 = \frac{2\tilde{d}^2s}{3\omega^4}\,, \quad \gamma_2=\frac{3 \tilde{d} \omega^4}{16s}\,, \\
&\tilde{n}=\displaystyle{1\over e^{\omega}-1}\,, \quad s= (\tilde{n}+1)\ln (\tilde{n}+1) - \tilde{n} \ln \tilde{n}\,.
\end{split}
\end{align}						
In the above, note that we have used $\ln(2)\,\varsigma=s,$ that is, we have simplified $\ln (2)\log(\cdot)=\ln(\cdot)$.

\section{Finding $\omega,\Omega$ for Gaussian states with bounded energy \label{AppC}}
Here we show that any subset of single-mode Gaussian states with bounded energy, i.e. ${\cal G}=\{\rho_G\vert \Tr[{\rho_G\hat n}]\leq \bar{n}\}$, obeys the desired cutoff condition for suitable values of $\omega,\Omega$. We recall the cutoff condition as
\begin{equation}\label{eq:cutoffgaussian}
\av{e^{\omega \hat n}}=\frac{2 \exp\left[{\frac{2\Re(\alpha)^2}{\kappa_+^2}+\frac{2\Im(\alpha)^2}{\kappa_-^2}}\right]}{(e^\omega-1)\kappa_+ \kappa_-} \leq \Omega\,,
\end{equation}
with $\kappa_\pm=\sqrt{\coth \left(\frac{\omega }{2}\right)-(2 m+1) e^{\pm 2 r}}$, and that the energy of a Gaussian state, in our chosen notation, reads $\langle\hat n\rangle=|\alpha|^2+m\cosh(2r)+\sinh^2r$. First, we note that $\kappa_-\geq\kappa_+$ (recall that we are assuming $r\geq0$), and by a simple constrained optimization, one can further show that $(2 m+1) e^{\pm 2 r}\leq3/2+2\bar n(2+\bar n)$. Hence, fixing $0<\epsilon<1$ and exploiting $\coth(\omega/2)>2/\omega$ for $\omega<2$, we find that $\kappa_+^2\geq\tfrac{2}{\omega}(1-\epsilon),$ provided that $\omega\leq\tfrac{2\epsilon}{3/2+2\bar n(2+\bar n)}$. Putting all this together, while recalling $e^\omega-1\geq \omega $ and $|\alpha|^2\leq \bar n$, we obtain
$\av{e^{\omega \hat n}}\leq \exp\left(\frac{\omega\bar n}{1-\epsilon}\right)/(1-\epsilon).$ Thus, for any $0<\epsilon<1$ we can satisfy our exponential cutoff condition for the set $\cal G$ by picking $\omega,\Omega$ such that
\begin{align}
\Omega&> 1/(1-\epsilon),\\
\omega&=\min\left\{ \frac{2\epsilon}{3/2+2\bar n(2+\bar n)}\, ,\,\frac{1-\epsilon}{\bar n}\ln\left((1-\epsilon)\Omega\right) \right\}.
\end{align}
Note that there is some freedom in the choice of the above parameters, which may be exploited for further optimization (e.g. by minimising $\gamma_1$ above).

\section{The measure-and-prepare channel $\cE_j$ in equation (\ref{mp1})\label{AppApp}}
\noindent The measure-and-prepare channel $\cE_j$ is explicitly given by
\beq
\cE_j(X) := \cN^{-2} \bbE_z \Tr ((\rho_A^z)^T O X O^{\dagger}) \rho_{B_j}^z,
\eeq
where $O = \sum_i {1 \over \phi_i } \ket{i}\bra{i}$, and $\cN$ and $\phi_i$ are the same as in the entangled state $\Phi\equiv \ket{\phi}\bra{\phi}$ in Lemma \ref{exp_cut_lem2}. We can confirm that $\bbE_z \rho_A^z \otimes \rho_{B_j}^z$ is the modified Choi-Jamio\l kowski state of $\cE_j$ as follows
\begin{align}
J(\cE_j) &= \id_A \otimes \cE_j(\Phi) \\
&= \cN^2 \sum_{kk'} \phi_k \phi_{k'} \id_A ( \ket{k}\bra{k'} ) \otimes \cE_j(\ket{k}\bra{k'}) \\
&= \sum_{kk'} \phi_k \phi_{k'} \ket{k}\bra{k'} \otimes \bbE_z \Tr ((\rho_A^z)^T O \ket{k}\bra{k'} O^{\dagger}) \rho_{B_j}^z \\
&= \sum_{kk'} \ket{k}\bra{k'} \otimes \bbE_z \Tr ((\rho_A^z)^T \ket{k}\bra{k'}) \rho_{B_j}^z \\
&= \bbE_z \sum_{kk'} \bra{k'} (\rho_A^z)^T \ket{k} \ket{k} \bra{k'} \otimes \rho_{B_j}^z \\
&= \bbE_z \rho_A^z \otimes \rho_{B_j}^z.
\end{align}
We still need to show that $\{M_z\} = \{ \cN^{-2} p(z) O^{\dagger} (\rho_A^z)^T O \}$ is a POVM. It is positive because for $X$ positive semidefinite, so is $O^{\dagger} X O$. We now show that completeness holds. First we have
\begin{align}
\rho_A &= \Tr_{\backslash A} (\bbI_A \otimes \Lambda[\Phi]) \\
&= \Tr_{\backslash A} (\bbI_A \otimes \Lambda \left[ \cN^2 \sum_{kk'} \phi_k \phi_{k'} \ket{kk}\bra{k'k'} \right] ) \\
&= \cN^2 \sum_{kk'} \phi_k \phi_{k'} \ket{k}\bra{k'} \Tr (\Lambda \left[ \ket{k}\bra{k'} \right] ) \\
&= \cN^2 \sum_{k} \phi_k^2 \ket{k}\bra{k} \left( = \rho_A^T \right)
\end{align}
Then
\begin{align}
\sum_z M_z &= \sum_z p(z) \cN^{-2} O^{\dagger} (\rho_A^z)^T O \\
&= \cN^{-2} O^{\dagger} (\sum_z p(z) \rho_A^z)^T O \\
&= \cN^{-2} O^{\dagger} (\rho_A)^T O \\
&= \cN^{-2} O^{\dagger} \rho_A O \\
&= \sum_j {1 \over \phi_j } \ket{j}\bra{j} \sum_{k} \phi_k^2 \ket{k}\bra{k} \sum_{j'} {1 \over \phi_{j'} } \ket{j'}\bra{j'} \\
&= \sum_{jj'k} {1 \over \phi_j } {1 \over \phi_{j'} }\phi_k^2 \ket{j}\bra{j} \ket{k}\bra{k} \ket{j'}\bra{j'} \\
&= \sum_{k} {1 \over \phi_k } {1 \over \phi_{k} } \phi_k^2 \ket{k}\bra{k} = \bbI,
\end{align}
where we have used $\rho_A = \sum_z p(z) \rho_A^z$ and $(\rho_A)^T= \rho_A$ due to it being diagonal in the computational basis.


\end{document}